\newtheorem{lemma}{Lemma}
\theoremstyle{plain} \newtheorem{theorem}{Theorem}
\theoremstyle{plain} \newtheorem*{untheorem}{Theorem}
\theoremstyle{definition} \newtheorem{definition}{Definition}
\theoremstyle{definition} 
\theoremstyle{remark} \newtheorem{remark}{Remark}[chapter]
\newcommand{\E}{\mathbb{E}}
\newcommand{\tv}{\boldsymbol{\theta}}
\newcommand{\gv}{\boldsymbol{\gamma}}
\newcommand{\Lkappa}{\mathcal{K}}
\newcommand{\htv}{\hat{\tv}}
\newcommand{\XX}{|\mathcal{X}|}
\newcommand{\YY}{|\mathcal{Y}|}
\newcommand{\XY}{|\mathcal{X}||\mathcal{Y}|}
\newcommand{\hXY}{\XY/2}
\newcommand{\BXY}{B}
\newcommand{\condvar}{\frac{\theta P}{\theta + P}}
\newcommand{\Wiener}{\frac{P}{\theta + P}}
\newcommand{\ph}{\hat{\pi}}
\begin{document}

\pagestyle{empty}

\newcommand{\thesistitle}{Rateless Codes for Finite Message Set}
\newcommand{\titlespacing}{24 mm}

\begin{center}
 \vspace*{-12 mm}
 \includegraphics[scale=1,angle=0]{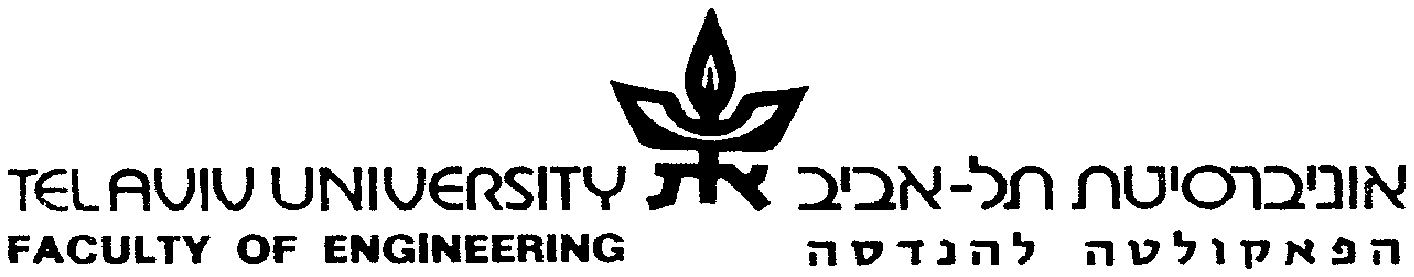}
 \vspace{-4 mm} \large Department of Electrical  Engineering  - Systems\\
 \vspace{\titlespacing}
\vspace{15mm}
 \Huge  {\bf \thesistitle} \\
\vspace{5mm}
 \vspace{\titlespacing}
  \large Thesis submitted toward the degree of \\``Master of Science in Electrical and Electronic Engineering''\\
 \vspace{8 mm} by \\
 \vspace{5 mm} \Large {\bf Navot Blits} \\
\vspace{10mm}
\vspace{-10mm} \vspace{-20mm}
 \vspace{67 mm} \large 2012 \\
\end{center}

\cleardoublepage
 \begin{center}
 \vspace*{-12 mm}
 \includegraphics[scale=1,angle=0]{tau_logo.eps}
 \vspace{-4 mm} \large Department of Electrical  Engineering  - Systems\\
 \vspace{\titlespacing}
 \Huge  {\bf \thesistitle} \\
 \vspace{\titlespacing}
  \large Thesis submitted toward the degree of \\``Master of Science in Electrical and Electronic Engineering''\\
 \vspace{8 mm} by \\
 \vspace{5 mm} \Large {\bf Navot Blits} \\
 \vspace{8 mm}\large This research was conducted at the\\
 \vspace{4 mm} Department of Electrical Engineering - Systems,\\
 \vspace{4 mm} Tel Aviv University\\
 \vspace{4 mm} under the supervision of\\
 \vspace{6 mm} \Large {\bf Prof. Meir Feder} \\
 \vspace{7.5 mm} \large 2012 \\

\end{center}
\large

\normalsize

\cleardoublepage

\begin{center}
~\\
 \vspace{36mm} \large This study was carried out under the supervision of\\
 \vspace{6mm} \Large {\bf Prof. Meir Feder}
\end{center}

\cleardoublepage

\baselineskip0.7cm

\chapter*{Acknowledgements}
I would like to thank my supervisor, Prof. Meir Feder, for his devoted mentoring throughout my research. Without Meir's enthusiasm, curiosity and endless support, this thesis could not have been what it is now. His depth as an information theorist, his broad-mindedness as a scientist and his openheartedness inspired me all along the way. I am grateful for having given the opportunity to walk this path with him.

I would also like to thank Or Ordentlich and Yuval Lomnitz for their interest and insights, which helped me tackle some of the most challenging problems in my research.

\cleardoublepage

\begin{abstract}
In this study we consider rateless coding over discrete memoryless channels (DMC) with feedback. Unlike traditional fixed-rate codes, in rateless codes each codeword in infinitely long, and the decoding time depends on the confidence level of the decoder. Using rateless codes along with sequential decoding, and allowing a fixed probability of error at the decoder, we obtain results for several communication scenarios. The results shown here are non-asymptotic, in the sense that the size of the message set is finite.

First we consider the transmission of equiprobable messages using rateless codes over a DMC, where the decoder knows the channel law. We obtain an achievable rate for a fixed error probability and a finite message set. We show that as the message set size grows, the achievable rate approaches the optimum rate for this setting. We then consider the \emph{universal} case, in which the channel law is unknown to the decoder. We introduce a novel decoder that uses a mixture probability assignment instead of the unknown channel law, and obtain an achievable rate for this case.

Finally, we extend the scope for more advanced settings. We use different flavors of the rateless coding scheme for joint source-channel coding, coding with side-information and a combination of the two with universal coding, which yields a communication scheme that does not require any information on the source, the channel, or the amount the side information at the receiver.
\end{abstract}

\cleardoublepage

\pagestyle{plain}
\pagenumbering{roman}

\tableofcontents
\singlespace
\listoffigures

\cleardoublepage

\baselineskip0.7cm

\pagestyle{plain}
\pagenumbering{arabic}

\chapter{Introduction}
\section{Background} \label{sec:Background}
In traditional channel coding schemes the code rate, which is the ratio between the lengths of the encoder's input and output blocks, is an integral part of the code definition. If one of $M$ messages is to be encoded at rate $R$, then the corresponding codeword has length $n=(\log M)/R$. Provided that the rate is chosen properly, the error probability decreases as $M$ grows. The capacity of the channel $C$ is defined as the largest value of $R$ for which the error probability can vanish.

An alternative approach to fixed-rate channel coding is \emph{rateless codes}. In this approach, we abandon the basic assumption of a fixed coding rate, and allow the codeword length, and hence also the rate, to depend on the channel conditions. When the encoder wants to send a certain message, it starts transmitting symbols from an infinite-length codeword. The decoder receives the symbols that passed through the channel and when it is confident enough about the message, it makes a decision. Perhaps the simplest example of a rateless code is the following (see e.g. \cite[Ch.3]{Nadav} or \cite[Ch.7]{Cover}). Suppose that we have a binary erasure channel (BEC) with erasure probability $\delta$. Suppose also that noiseless feedback exists, i.e. the encoder at time instant $n$ has an access to the outputs of the channel at times $1,\ldots,n-1$. We use a simple repetition coding, in which each binary symbol is retransmitted until the decoder receives an unerased symbol. Since the erasure probability is $\delta$, the expected number of transmissions until an unerased symbol is received is $1/(1-\delta)$. This transmission time implies a rate of $1-\delta$, which is exactly the capacity of the binary erasure channel. This simple setting exemplifies some important concepts of rateless codes. First, the transmission time is not fixed, but rather is a random variable (geometrically-distributed in the above case); second, when the length of the transmission is set dynamically, the error probability may be controllable. In this case the transmission is only terminated once the decoder \emph{knows} what message has been transmitted, so the error probability of this coding scheme is zero; third, the code design is rate-independent. In fact, this code can be used for any binary erasure channel; fourth, the continuity of the transmission requires feedback to the encoder. Indeed, as we shall see in this thesis, when rateless codes are used for point-to-point communication, some form of feedback, which can be limited to decision feedback, must exist to enable continuity. However, rateless codes are also invaluable for other settings such as multicast or broadcast communications, in which the existence of feedback is not explicitly required. Shulman \cite{Nadav} introduced the concept of \emph{Static Broadcasting}, in which the transmitter sends a message to multiple users, and each user remains connected until it retrieved enough symbols to make a confident decision. This scheme does not require feedback; the user remains online only as much as it needs, and the rate is determined according to the time the user spent online.

In this thesis we assume a discrete memoryless channel (DMC) with feedback, and devise rateless coding schemes which allow a small (but fixed) error probability $\epsilon$. We investigate the dependence between the rate, the error probability and the size of the message set. The entire analysis is done for a finite message set, and we show that when the size of the message set is taken to infinity, our results agree with classic results from coding theory. We also investigate the rate of convergence to these results. We start by building a simple rateless coding scheme for a known channel. The motivation for this method is due to Wald's analysis (see \cite[Ch.3]{Wald}), where he demonstrated that the Sequential Probability Ratio Test (SPRT) performs like the most powerful test in terms of error probabilities, while using about half the samples on average.

Building on the rateless coding scheme devised for the case of known channel, we obtain a \emph{universal} channel coding scheme that does not require channel knowledge at the receiver. Unlike previous results on universal decoding, the results here are non-asymptotic and are valid for an arbitrary message set size. We then extend the coding scheme to joint source-channel coding, and show that optimal rate is achievable even when the encoder is uninformed on the source statistics. Next, we use a rateless coding scheme for source coding with side information at the receiver and show that the Slepian-Wolf rate for this scenario is achievable even when the encoder is unaware of the amount of side information. Finally, we show how to combine the above-mentioned techniques with universal source coding, to obtain a scheme that can operate when the statistics of both the source and the channel are unknown, potentially using side information that is obscure to the encoder.

Our work follows previous results discovered by Shulman \cite{Nadav} for the universal case, where the decoder is ignorant of the channel law. In particular, a sequential version of the maximal mutual information (MMI) decoder \cite{CsiszarKorner} is used for universal channel decoding and joint-source channel coding, including the case of side information at the decoder. However, the results in \cite{Nadav} are asymptotic in the size of the message set, while the analysis here is made for a fixed size of the message set. For the case of known channel, the decoder used here can be viewed as the counterpart of the sequential MMI decoder that uses the channel law rather than the empirical mutual information. This scheme has been originally introduced by Polyanskiy in \cite{PPV}, where it is proven to achieve the best variable-length coding rate. While the analysis in \cite{PPV} concentrates on finding the best achievable size of the message set with a constraint on the average decoding time, in this paper we seek the optimum decoding time for a fixed size of the message set. More importantly, the analysis introduced here is then extended naturally to apply for the case of unknown channel, where we use a novel universal decoder, as well as for joint source-channel coding with and without side information at the receiver.

\section{Thesis Outline} \label{sec:Outline}
The rest of the thesis is organized as follows. In Chapter \ref{ch:DefinitionsAndNotation} we define rateless codes and provide related definitions and notation. In Chapter \ref{ch:PreviousResults} we survey previous results related to universal communication and rateless codes. In Chapter \ref{ch:KnownChannel} we treat the case of known channel, for which we obtain an achievable rate using rateless codes. We also prove a converse theorem showing that this rate is asymptotically optimal, and we analyze the rate of convergence. The case of unknown channel is examined in Chapter \ref{ch:UnknownChannel}, where we develop a universal decoder and analyze its performance for a general DMC. In Chapter \ref{ch:Extensions} we extend the coding scheme for the case of message sets with non-equiprobable messages, and we also show how rateless coding can be used for problems with side information. Chapter \ref{ch:Summary} concludes the thesis.

\chapter{Definitions and Notation}
\label{ch:DefinitionsAndNotation}
Throughout this thesis, random variables will be denoted by capital letters and their realizations by the corresponding lowercase letters. Vectors are denoted by superscript that indicate their length, for instance $X^n = [X_1,\ldots,X_n]$. Unless otherwise stated, all logarithms are taken to the base of 2. We focus on communication over a discrete memoryless channel (DMC) characterized by a transition probability $p(y|x)$, $x \in \mathcal{X} , y \in \mathcal{Y}$, where $\mathcal{X}$ and $\mathcal{Y}$ are the input and output alphabets of the channel, respectively. With a slight abuse of notation, we use $p(\cdot|\cdot)$ also to denote the joint transition probabilities of the channel, thus $p(y^n|x^n) = \prod_{i=1}^n p(y_i|x_i)$. The capacity of the channel (in bits per channel use) in conventionally defined as $C = \max_{q(x)}I(X;Y)$, where $I(X;Y)$ is the mutual information between the input of the channel and its output, and the maximization is over all channel input priors $q(x)$. If $|\mathcal{X}| = |\mathcal{Y}|$, and $p(y|x)=1$ if $x=y$ and $p(y|x)=0$ otherwise, then the channel is said to be noiseless, and in that case $C = \log |\mathcal{X}|$. We also assume that a noiseless feedback exists from the receiver to the transmitter.

A rateless code has the following elements:
\begin{enumerate}
\item Message set $\mathcal{W}$ containing $M$ messages. Without the loss of generality we assume that $\mathcal{W}=\{1,\ldots,M\}$, with corresponding probabilities $\pi(1),\ldots,\pi(M)$. Occasionally, we define $K=\log M$ as the number of bits conveyed in a message.
\item Codebook $\mathcal{C} = \{\mathbf{c}_i\}_{i=1}^M$, where each codeword $\mathbf{c}_i \in \mathcal{X}^\infty$ is generated by drawing i.i.d. symbols according to a prior $q(x), x\in\mathcal{X}$.
\item Set of encoding functions $f_n:\mathcal{W} \rightarrow \mathcal{X}$, $n \geq 1$.
\item Set of decoding function $g_n:\mathcal{Y}^n \rightarrow \mathcal{W} \cup \{0\}$, $n \geq 1$.
\end{enumerate}
Unlike conventional codes, for which the rate is a fundamental property, the above description does not specify a working rate--hence the term \emph{rateless code}. To encode a message $w \in \mathcal{W}$, the encoder starts transmitting the codeword $\mathbf{c}_w$ over the channel. Upon receiving each channel output, the decoder can either decide on one of the messages $\hat{w}$ or decide to wait for further channel outputs, returning `$0$'. Through feedback, the decoder's decision is known to the encoder, which correspondingly decides whether to transmit further symbols from $\mathbf{c}_w$ or to proceed to the next message. We note that two different forms of feedback can be assumed here: channel feedback and decision feedback. In channel feedback, the encoder at time instance $t$ observes $Y^{t-1}$, the channel outputs so far, and by imitating the decoder's operation it becomes aware of any decision made by the decoder. In decision feedback, the encoder is only informed that a decision has been made, and it can proceed to the next message. While channel feedback requires no intervention from the decoder in the feedback process, it essentially assumes that the feedback channel has the same bandwidth as the main channel. Decision feedback, in contrast, requires only one feedback bit per symbol.

We conclude this section with a few definitions required for the next sections.
\begin{definition} \label{def:StoppingTime}
A \emph{stopping time} $T$ of a rateless code is a random variable defined as
\begin{equation} \label{eq:StoppingTimeDef}
    T = \min\{n: g_n(Y^n) \neq 0\}
\end{equation}
\end{definition}
\begin{definition} \label{def:EffectiveRate}
An \emph{effective rate} $R$ of a rateless code is defined as
\begin{equation} \label{eq:EffectiveRateDef}
    R = \frac{\log M}{\E\{T\}}
\end{equation}
where $\E\{T\}=\E_q \{ \E_p \{T\}\}$, i.e. the averaging is done over all possible codebooks and channel realizations.
\end{definition}
Using the definition of stopping time, we can define the error event as the case in which the decoder stops, deciding on the wrong message. The error event conditioned on a particular message is defined as
\begin{equation} \label{eq:EmDef}
    E_w = \{\hat{W} \neq w \ | \ W = w \}
\end{equation}
where $\hat{W}=g_{_T}(Y^T)$.
The average error probability for the entire message set is therefore
\begin{equation} \label{eq:PeDef}
    P_e = \sum_{w=1}^M \pi(w) \cdot \Pr\{E_w\}
\end{equation}
\begin{definition}
For a given DMC, an $(R,M,\epsilon)$-code is a rateless code with effective rate $R$, containing $M$ messages and error probability $P_e \leq \epsilon$.
\end{definition}

\chapter{Previous Results}
\label{ch:PreviousResults}
As noted, the rateless coding scheme is a special case of communication over a channel with feedback. Shannon \cite{ShannonZEC} proved that the capacity of a DMC is not increased by adding feedback. However, adding feedback \emph{can} increase the zero-error capacity of the channel. In his well known paper, Burnashev \cite{Burnashev} investigated the effect of feedback in communication over a DMC by analyzing the error exponent of such channel. Introducing the notion of random transmission time, Burnashev obtained a bound on the mean transmission time for a fixed error probability, from which he derived the error exponent\footnote{Referred to as \emph{reliability function}.} for a DMC with feedback. He also proved a converse theorem showing that the expected transmission time, hence also the error exponent, are asymptotically optimal. (That is, they coincide with the results of the converse theorem as the size of the message set grows to infinity.) The main result of \cite{Burnashev} is the following theorem.
\begin{untheorem}[Burnashev \cite{Burnashev}]
The optimum error exponent for a DMC with noiseless feedback is
\begin{equation}\label{eq:BurnasheErrorExp}
    \lim_{M \to \infty} -\frac{1}{\E\{T\}} \log P_e = C_1\left(1 - \frac{R}{C} \right), \qquad 0 \leq R \leq C
\end{equation}
where $T$ is the transmission time, $R$ is defined in \eqref{eq:EffectiveRateDef} and
\begin{equation}\label{eq:C1Def}
    C_1 \triangleq \max_{(x,x') \in \mathcal{X} \times \mathcal{X}} D\left(p(\cdot|x)||p(\cdot|x')\right)
\end{equation}
\end{untheorem}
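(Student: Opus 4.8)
The plan is to establish the two matching halves of \eqref{eq:BurnasheErrorExp} separately: a \emph{direct part} exhibiting a rateless feedback scheme whose error probability decays with exponent $C_1(1-R/C)$, and a \emph{converse part} showing that no scheme with noiseless feedback can do asymptotically better.

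For the direct part I would use the classical two-phase iterated construction. In the \emph{communication phase} the encoder transmits a fixed-rate block code of rate slightly below $C$ and length $n_1 \approx (\log M)/C$; after this phase the decoder forms a tentative estimate $\hat W$, which is correct with probability bounded away from $1$, and through the feedback the encoder learns whether $\hat W = W$. In the \emph{confirmation phase} the encoder repeatedly sends one of two designated input letters according to a binary verdict (``accept'' if $\hat W = W$, ``reject'' otherwise), using the ordered pair $(x,x')$ that attains the maximum in \eqref{eq:C1Def}; the decoder runs a sequential likelihood-ratio test between the two induced output distributions. A ``reject'' verdict (or an inconclusive test) triggers a restart of the whole procedure, whereas an ``accept'' verdict makes the decoder output $\hat W$ and stop. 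An undetected error occurs only when a ``reject'' is mistaken for an ``accept,'' and after $n_2$ confirmation symbols the probability of this event behaves like $2^{-n_2 C_1}$ because $C_1$ is precisely the relevant binary-discrimination exponent. Since the retransmission probability can be driven to $0$, one has $\E\{T\} = (1+o(1))(n_1 + n_2)$, so optimizing the split between $n_1$ and $n_2$ while holding $R = (\log M)/\E\{T\}$ fixed gives $-\tfrac{1}{\E\{T\}}\log P_e \to C_1\bigl(1 - n_1/\E\{T\}\bigr) = C_1(1 - R/C)$.

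For the converse I would track the posterior uncertainty about $W$ given the received prefix and combine two drift inequalities that hold for \emph{every} feedback code. Since a single channel use carries at most $C$ bits of mutual information, the posterior entropy $H_n \triangleq H(W\mid Y^n)$ satisfies $\E[H_{n+1}\mid Y^n] \ge H_n - C$, so it takes at least about $(\log M)/C$ steps on average for $H_n$ to fall from $\log M$ to a fixed constant. Near termination, when the posterior is essentially supported on the true message and one competitor, the fastest possible decay of the posterior error is governed by binary hypothesis testing with the best pair of letters, yielding a bound of the form $\E[\log H_{n+1}\mid Y^n] \ge \log H_n - C_1$ in that regime; this forces roughly an extra $\log(1/P_e)/C_1$ steps before $H_n$ can reach the $O\bigl(h(P_e) + P_e \log M\bigr)$ level allowed by Fano's inequality, where $h(\cdot)$ is the binary entropy. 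Applying the optional stopping theorem at $T$ to a potential assembled from these pieces gives $\E\{T\} \ge (\log M)/C + \log(1/P_e)/C_1 - o(\log M)$, which rearranges to $-\tfrac{1}{\E\{T\}}\log P_e \le C_1(1 - R/C) + o(1)$ as $M \to \infty$ with $R$ fixed.

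I expect the converse to be the hard part. The decrease-by-$C$ bound holds everywhere, but the ``$\log$-error drops by at most $C_1$'' estimate is only transparent once the posterior is concentrated on two atoms, so the real work is to build a single supermartingale-type potential — a suitable mixture of $H_n$ and $\log H_n$, or a two-piece stopped process — that is globally valid across the transition from the ``many plausible messages'' regime to the ``binary confirmation'' regime, while controlling all error terms so that they are $o(\log M)$ and hence vanish after division by $\E\{T\} \asymp \log M$. By comparison the direct part is routine once the two-phase template is fixed and $C_1$ is identified as the confirmation-phase exponent.
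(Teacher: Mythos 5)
This statement is not proved in the thesis at all: it is Burnashev's theorem, quoted verbatim as background in the ``Previous Results'' chapter with a citation to \cite{Burnashev}, so there is no in-paper proof to compare against. Judged on its own, your outline is a faithful reconstruction of the classical argument: the direct part is the Yamamoto--Itoh two-phase (communication plus binary confirmation) scheme, with $C_1$ correctly identified via Stein's lemma as the exponent of mistaking ``reject'' for ``accept'' when the confirmation letters are the maximizing pair in \eqref{eq:C1Def}, and the rate accounting $-\tfrac{1}{\E\{T\}}\log P_e \to C_1(1-n_1/\E\{T\}) = C_1(1-R/C)$ is right once the retransmission probability vanishes (note one slip: you need the tentative estimate to be correct with probability tending to one, not merely ``bounded away from 1,'' for $\E\{T\}=(1+o(1))(n_1+n_2)$ to hold). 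The converse sketch likewise matches Burnashev's actual proof strategy: a drift bound $\E[H_{n+1}\mid Y^n]\ge H_n - C$ on the posterior entropy, a second drift bound of at most $C_1$ on $\log H_n$, Fano's inequality at the stopping time, and optional stopping applied to a composite potential. You correctly flag where the real work lies --- Burnashev's log-drift lemma in fact holds for the posterior-entropy process in general (not only in the two-atom regime), and proving that, together with stitching the two regimes into a single submartingale with $o(\log M)$ error terms, is the substantive content that your plan names but does not supply. So: correct as a proof plan and consistent with the literature's proof, but it remains an outline, and it is proving something the thesis itself only cites.
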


Examining \eqref{eq:BurnasheErrorExp} we can observe that whenever $R \geq C$, the error exponent vanishes, which concurs with Shannon's result \cite{ShannonZEC}. Moreover, whenever the channel has at least two inputs that are completely distinguishable from one another, i.e. $p(y|x)>0$ and $p(y|x')=0$ for some $x,x' \in \mathcal{X}$ and $y \in \mathcal{Y}$, it holds that $D\left(p(\cdot|x)||p(\cdot|x')\right) \to \infty$ and hence also $C_1 \to \infty$ for that channel. Therefore, the error exponent in that case is infinite at \emph{every} rate below the channel capacity, which implies that the zero-error capacity coincides with the channel capacity $C$.

Also for the case feedback channels, Shulman \cite{Nadav} developed a coding scheme providing reliable communication over an unknown channel, without compromising the rate. Introducing the concept of \emph{static broadcasting}, which is based on random codebook and universal sequential decoder, he demonstrated that it is possible to achieve vanishing error probability at rate that tends to the capacity of the channel as the size of the message set grows indefinitely. Furthermore, Shulman showed that even if the statistics of the information source is unknown to the transmitter, this scheme achieves the optimal decoding length that would have been achieved if the source were compressed by an optimal source-encoder and the channel were known at both ends. More formally, if $K$ information bits of a source $S$ were to be transmitted over an unknown channel $W$, then the average decoding length satisfies
\begin{equation}\label{eq:DecodingLengthNadav}
    \lim_{K \to \infty} \frac{\E\{T\}}{K} = \frac{H(S)}{I(P;W)}
\end{equation}
where $P$ is the codebook generation prior and $I(P;W)$ is the mutual information between the input and the output of the channel $W$ when the input is drawn according to distribution $P$.

Shulman also used the coding scheme for source-encoding of correlated sources. He demonstrated that using static broadcasting, it is possible to achieve the Slepian-Wolf optimal rate region. Combining all into one communication scheme, the achievable decoding length is
\begin{equation}\label{eq:DecodingLengthNadav}
    \lim_{K \to \infty} \frac{\E\{T\}}{K} = \frac{H(S|Z)}{I(P;W)}
\end{equation}
where $Z$ is the side information at the decoder. Shulman's work has been the main inspiration for this research.

For the case of unknown channel, Tchamkerten and Telatar in \cite{Telatar} used a rateless coding scheme similar to the one defined in Chapter \ref{ch:DefinitionsAndNotation}, where the stopping condition is that the mutual information between (at least) one of the codewords and the channel output sequence exceeds a certain time-dependent threshold. The authors proved that this scheme can achieve the capacity of a general DMC.\footnote{Since no assumption has been made on the capacity-achieving prior, authors only demonstrated that the rate approaches $I(PQ)$, where $P$ is the codebook generation prior and $Q$ is the transition probability of the channel.} Moreover, they demonstrated that for the class of binary symmetric channel with crossover probabilities $L \in [0,1/2)$, this coding scheme can achieve Burnashev's exponent at a rate bounded by any fraction of the channel capacity. The latter result is obtained by using a second coding phase, in which the transmitter indicates whether the decoder's decision is correct (an \emph{Ack/Nack} phase). Tchamkerten and Telatar also demonstrated that for the class of $Z$ channels with parameter $L \in [0,1)$, the achievable rate can be arbitrarily close to the channel capacity, while the error exponent is infinite. The latter result also coincides with Burnashev's exponent ($C_1$ in \eqref{eq:BurnasheErrorExp} is infinite in this case), since error-free communication is attainable for the $Z$ channel.

We note that all the above-mentioned results were asymptotic in the size of the message set. Recently, Polyanskiy, Poor and Verd{\'u} in \cite{PPV} introduced non-asymptotic results for communication over DMC with feedback. Through the use of variable-rate coding and sequential decoding they obtained upper and lower bounds for the maximal message set size for fixed bounds on the error probability and mean decoding length. The authors showed that for an error probability constraint $P_e \leq \epsilon$ and mean decoding length constraint $\E \{T\} \leq \ell$, the maximal message set size $M^*(\ell,\epsilon)$ satisfies
\begin{equation}\label{eq:PolyanskiyUpperLower}
    \frac{\ell C}{1-\epsilon} - \log \ell + O(1) \leq \log M^*(\ell,\epsilon) \leq \frac{\ell C}{1-\epsilon} + O(1)
\end{equation}
The setting of \cite{PPV}, as well as the coding scheme, is similar to the one defined later in Chapter \ref{ch:KnownChannel}. However, while in \cite{PPV} the optimization is on $M$, for fixed $\epsilon$ and $\ell$, we fix $\epsilon$ and $M$ and find the optimum mean decoding length. The analysis is slightly different, but the results of Chapter \ref{ch:KnownChannel} comply with \cite{PPV}. The analysis in Chapter \ref{ch:KnownChannel}, coming next, lays the ground for the derivation of our novel results for the case of unknown channel.

\chapter{Rateless Coding -- Known Channel}
\label{ch:KnownChannel}
\section{Sequential Decoder} \label{sec:SequentialDecoder}
We begin by introducing a rateless coding scheme for noisy channels and analyzing its effective rate, under certain constraints on the size of the message set and the error probability. As will be shown in the sequel, the effective rate is closely related to the channel capacity $C$. More precisely, we will show that under the conventional setting, in which the size message set is taken to infinity, the effective rate coincides with the capacity of the channel.

Consider a discrete memoryless source with a set of $M$ equiprobable messages, i.e. $\pi(i)=1/M$, $i=1,\ldots,M$. We use a rateless code as defined in Section \ref{ch:DefinitionsAndNotation}, where each codeword $\mathbf{c}_i$, $i=1,\ldots,M$ is generated by drawing i.i.d. symbols according to $q(x)$, the capacity-achieving prior of the channel. The source of randomness generating the codewords is shared by the encoder and the decoder, so that the codebook in known at both ends. The decoder uses the following decision rule:
\begin{equation} \label{eq:ChannelDecoderLin}
    g_n(y^n)= \begin{cases}
                w, \ \prod_{k=1}^n p(c_{w,k}|y_k) \geq A \cdot \prod_{k=1}^n q(c_{w,k}) \\
                0, \ \text{if no such $w$ exists}
              \end{cases}
\end{equation}
where $\{c_{w,k}\}_{k=1}^{\infty}$ are the symbols in $\mathbf{c}_w$. If the threshold crossing condition in \eqref{eq:ChannelDecoderLin} is satisfied by more than one codeword, we randomly choose one of them and declare an error. We note here that similar decoders have been proposed by Polyanskiy \cite{PPV} and Burnashev \cite[Ch.3]{Burnashev}.
The decision rule at \eqref{eq:ChannelDecoderLin} can be equivalently written as
\begin{equation} \label{eq:ChannelDecoderLog}
    g_n(y^n)= \begin{cases}
                w, \ z_{w,1}+\ldots+z_{w,n} \geq a \\
                0, \ \text{if no such $w$ exists}
              \end{cases}
\end{equation}
where
\begin{equation}
    z_{w,k} = \log \frac{p(c_{w,k}|y_k)}{q(c_{w,k})}, \qquad k=1,\ldots,n
\end{equation}
and we define $a = \log A$.

The above-described coding scheme can be summarized as follows. Having selected a message, the encoder starts transmitting an infinite-length random codeword corresponding to that message. The decoder sequentially receives symbols from this codeword that passed through the channel, and at each time instant $k$ calculates $z_{w,k}$ for $w={1,\ldots,M}$. It then updates a set of $M$ accumulators, each corresponding to a possible message, and checks whether any of those crossed a prescribed threshold $a$. If neither of the counters crossed the threshold, `$0$' is returned and the decoder waits for the next channel output; if exactly one counter crossed the threshold, the decoder makes a decision; and if more that one threshold crossing occurred, an error is declared. In the two latter cases, the encoder proceeds to the next codeword.

For the above-described scheme we have the following theorem.
\begin{theorem} \label{Theorem1}
For the decoder in \eqref{eq:ChannelDecoderLog} with $P_e \leq \epsilon$, the following effective rate is achievable:
\begin{equation}
    R = \frac{C}{1+ \frac{C - \log \epsilon}{\log M}} \label{eq:AchievableRateChannelDec}
\end{equation}
\end{theorem}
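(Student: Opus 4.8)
The plan is to pick the decision threshold $a$ just large enough to meet the error constraint, and then to express the mean decoding time as a function of $a$. Everything rests on two facts about the per-message accumulators $S_{w,n}:=z_{w,1}+\cdots+z_{w,n}$, where $z_{w,k}=\log\frac{p(c_{w,k}\mid y_k)}{q(c_{w,k})}=\log\frac{p(y_k\mid c_{w,k})}{r(y_k)}$ with $r(y)=\sum_x q(x)p(y\mid x)$ the output law induced by the capacity-achieving prior. First, for the transmitted message $W$ the increments $z_{W,k}$ are i.i.d.\ with
\[
\E\{z_{W,k}\}=\sum_{x,y}q(x)p(y\mid x)\log\frac{p(y\mid x)}{r(y)}=I(X;Y)=C,
\]
so $S_{W,n}$ is a random walk with positive drift $C$. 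Second, for any wrong message $w'\neq W$ the codeword $\mathbf{c}_{w'}$ is independent of the channel output, and $\E\{2^{z_{w',k}}\mid Y_k\}=\sum_x q(x)\,p(Y_k\mid x)/r(Y_k)=1$; hence $\{2^{S_{w',n}}\}_{n\ge1}$ is, conditionally on the output sequence and therefore also unconditionally, a nonnegative martingale of mean $1$ (incidentally its drift is $\le0$, reflecting that a wrong accumulator tends to decrease).

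For the error probability I would argue: the decoder outputs a message $w'$ only when $S_{w',T}\ge a$, so $E_W\subseteq\bigcup_{w'\neq W}\{\exists\,n:\ S_{w',n}\ge a\}$. Applying Ville's (Doob's) maximal inequality to the mean-$1$ nonnegative martingale above gives $\Pr\{\exists\,n:\ S_{w',n}\ge a\}=\Pr\{\sup_n 2^{S_{w',n}}\ge 2^{a}\}\le 2^{-a}$, and a union bound over the $M-1$ wrong messages yields $\Pr\{E_W\}\le (M-1)2^{-a}$; averaged over the equiprobable messages this also bounds $P_e$. Choosing $a=\log M-\log\epsilon$ therefore forces $P_e\le\epsilon$.

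For the mean decoding time I would use $T\le T_W:=\min\{n:\ S_{W,n}\ge a\}$, since the decoder must stop by the first time the correct accumulator crosses the threshold. Because $S_{W,n}$ has positive drift, $T_W<\infty$ a.s.\ and $\E\{T_W\}<\infty$, so Wald's identity — optional stopping applied to the martingale $\{S_{W,n}-nC\}$ — gives $C\,\E\{T_W\}=\E\{S_{W,T_W}\}$. Writing $S_{W,T_W}=S_{W,T_W-1}+z_{W,T_W}$ with $S_{W,T_W-1}<a$, I would bound the overshoot term and conclude $\E\{T\}\le\E\{T_W\}\le (a+C)/C$. Together with $a=\log M-\log\epsilon$ and $R=\log M/\E\{T\}$ this gives
\[
R\ \ge\ \frac{C\log M}{a+C}\ =\ \frac{C\log M}{\log M+C-\log\epsilon}\ =\ \frac{C}{1+\frac{C-\log\epsilon}{\log M}},
\]
as claimed.

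The main obstacle is the overshoot: making $\E\{S_{W,T_W}-a\}$ small enough that, after dividing by the drift $C$, it contributes no more than the single additive $1$ appearing in the statement. A quick bound replaces the overshoot by $\max_{x,y:\,p(y\mid x)>0}\log\frac{p(y\mid x)}{r(y)}$, which is finite for any DMC but not in general equal to $C$; tightening this to the clean form requires either a renewal-theoretic estimate of the expected overshoot or a more careful juggling of the threshold $a$ (exploiting the freedom in its choice and the $\log\frac{M}{M-1}$ slack in the error bound). Along the way one must also dispose of the routine measurability/finiteness points — that $T$ is a bona fide stopping time, that $T<\infty$ almost surely, and that $\E\{T\}<\infty$ — all of which follow from the positive drift of $S_{W,n}$ and the finiteness of $\mathcal{X}$ and $\mathcal{Y}$.
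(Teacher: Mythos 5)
Your proposal follows essentially the same route as the paper: Wald's identity applied to the correct codeword's log-likelihood-ratio walk with drift $C$ to get $\E\{T\}\le (a+C)/C$, and a union bound combined with a mean-one nonnegative-martingale maximal inequality (your Ville's inequality is exactly the paper's stopped-product construction $U_t$ followed by Markov's inequality) to get $P_e\le (M-1)2^{-a}$, with the same choice $a=\log M-\log\epsilon$. The overshoot issue you flag as the main obstacle is not handled more rigorously in the paper either: it simply writes $Z_1+\cdots+Z_T < a+Z_T$ and concludes $\E\{T\}<(a+C)/C$, implicitly treating $\E\{Z_T\}$ as at most $C$, so your account is, if anything, more candid about that loose step.
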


\begin{proof}
Since $T$ is a stopping time of the i.i.d. sequence $Z_1,Z_2,\ldots$, Wald's equation \cite{Wald} implies
\begin{equation} \label{eq:ETForWald}
    \E\{T\} = \frac{\E \{Z_1 + \ldots + Z_T\}}{\E\{Z\}},
\end{equation}
where $\E\{Z\}$ is the expectation of a single sample $Z_i$. If $X_i$ and $Y_i$ are input and output of the channel, respectively, then by the definition of $Z_i$ we have
\begin{equation} \label{eq:EZForWald}
    \E\{Z\} = \E\{Z_i\} = \E \left\{\frac{p(X_i|Y_i)}{q(X_i)}\right\} = C.
\end{equation}
Furthermore, since the stopping condition was not fulfilled time instant $T-1$ we have
\begin{equation}
    Z_1 + \ldots + Z_{T-1} < a
\end{equation}
which implies
\begin{equation} \label{eq:SumZForWald}
    Z_1 + \ldots + Z_T < a + Z_T
\end{equation}
Combining \eqref{eq:ETForWald}, \eqref{eq:EZForWald} and \eqref{eq:SumZForWald} we obtain
\begin{equation} \label{eq:WaldC}
    \E\{T\} < \frac{a+C}{C}.
\end{equation}

We now tune the threshold parameter $a$ to meet the error probability requirement. Suppose that the stopping time of the correct codeword is $T_w$. An error occurs if a competing codeword $\mathbf{c}_{w'}$, independent of $\{Y_k\}_{k=1}^{\infty}$, crosses the threshold before $\mathbf{c}_w$ does. Thus,
\begin{align}
    \Pr \{E_w\} &= \Pr \left\{ \bigcup_{w' \neq w} \bigcup_{t=1}^{T_w}
    \left\{ \frac{\prod_{k=1}^{t} p(C_{w',k}|Y_k)}{\prod_{k=1}^{t} q(C_{w',k})} > A \right\} \right\} \\
                &\leq (M-1) \Pr \left\{ \bigcup_{t=1}^{T_w}
    \left\{ \frac{\prod_{k=1}^{t} p(X_k|Y_k)}{\prod_{k=1}^{t} q(X_k)} > A \right\} \right\} \label{eq:ErrorProbUB} \\
                &\leq (M-1) \Pr \left\{ \bigcup_{t=1}^{\infty}
    \left\{ \frac{\prod_{k=1}^{t} p(X_k|Y_k)}{\prod_{k=1}^{t} q(X_k)} > A \right\} \right\} \label{eq:ErrorProbInf}
\end{align}
where \eqref{eq:ErrorProbUB} follows from the union bound for an arbitrary series $\{X_k\}_{k=1}^{\infty}$ drawn i.i.d. from $q(x)$, independently of $\{Y_k\}_{k=1}^{\infty}$. Note that the bound in \eqref{eq:ErrorProbInf} represents the probability that a randomly-chosen codeword will exceed the threshold at any time instant. Define a sequence of random variables
\begin{equation}\label{eq:uidef}
    U_t = \begin{cases}
                \frac{p(X_t|Y_t)}{q(X_t)}, \ \prod_{k=1}^{t-1} U_k \leq A \\
                1, \ \text{otherwise}
              \end{cases}
\end{equation}
If at instant $t$ the threshold at \eqref{eq:ErrorProbInf} is exceeded for the first time, then we have $U_k=p(X_k|Y_k)/q(X_k)$ for $k=1,\ldots,t$ and $U_k=1$ for all $k>t$. Therefore, it is easy to see that
\begin{equation}
    \bigcup_{t=1}^{\infty}
    \left\{ \frac{\prod_{k=1}^{t} p(X_k|Y_k)}{\prod_{k=1}^{t} q(X_k)} > A \right\}
    \Leftrightarrow \prod_{t=1}^{\infty} U_t > A
\end{equation}
We can also see that $\E\{U_t\}=1$ for \emph{all} $t$ because
\begin{equation*}
    \E \left\{ U_t|\prod_{k=1}^{t-1} U_k > A \right\} = 1
\end{equation*}
since $U_t = 1$ deterministically in this case, and
\begin{align}
    \E \left\{ U_t|\prod_{k=1}^{t-1} U_k \leq A \right\} &= \E \left\{ \frac{p(X_t|Y_t)}{q(X_t)} \right\} \\
    &= \E \left\{ \E \left\{ \frac{p(X_t|Y_t)}{q(X_t)}|Y_t \right\} \right\} \\
    &= \E \left\{ \sum_{x \in \mathcal{X}} \frac{p(x|Y_t)}{q(x)} \cdot q(x) \right\} \label{eq:IndependentX} \\
    &= 1
\end{align}
where \eqref{eq:IndependentX} follows since $X_t$ and $Y_t$ are independent. For an arbitrary $N$ we have
\begin{align}\label{eq:ProdU}
    \E \left\{\prod_{t=1}^N U_t \right\} &= \E \left\{ \E \left\{\prod_{t=1}^N U_t | \prod_{t=1}^{N-1} U_t \right\} \right\} \\
    &= \E \{ U_N \} \cdot \E \left\{\prod_{t=1}^{N-1} U_t \right\} \\
    &= \E \left\{\prod_{t=1}^{N-1} U_t \right\} = \ldots = \E\{U_1\} = 1
\end{align}
Since the above holds for all $N$, we also have
\begin{equation}\label{eq:InfProdU}
    \E \left\{\prod_{t=1}^\infty U_t \right\} = 1
\end{equation}
Returning to \eqref{eq:ErrorProbInf}, we get
\begin{align}
   \Pr \{E_w\} &\leq  (M-1) \Pr \left\{ \bigcup_{t=1}^{\infty}
    \left\{ \frac{\prod_{k=1}^{t} p(X_k|Y_k)}{\prod_{k=1}^{t} q(X_k)} > A \right\} \right\} \label{eq:BoundErrorProbKnownCh} \\
    &= (M-1) \Pr \left\{ \prod_{t=1}^{\infty} U_t > A \right\} \\
    &\leq \frac{M-1}{A} \label{eq:Markov}
\end{align}
where \eqref{eq:Markov} follows from \eqref{eq:InfProdU} and Markov's Inequality.
Since the above holds for all $w \in \mathcal{W}$, we also have
\begin{equation}
    P_e \leq \frac{M-1}{A}
\end{equation}

By choosing $a = \log M - \log \epsilon $, or equivalently $A = M/\epsilon$, we secure that $P_e < \epsilon$. Substituting $a$ into \eqref{eq:WaldC} and using Definition \ref{def:EffectiveRate}, we obtain \eqref{eq:AchievableRateChannelDec}.
\end{proof}

It is important to note that the encoding operation is independent of the working rate; the encoder needs to know the channel law only to generate the codebook. However, if the channel is known to belong to a family for which the capacity-achieving prior is known (e.g. the uniform prior for symmetric channel), then the optimal rate can be achieved even when the encoder is uninformed on the channel law. Furthermore, from a practical point of view, using the uniform prior instead of the capacity-achieving prior is known to perform relatively well in many cases. For instance, using a uniform prior in a binary channel will lose at most 6\% of the capacity (see \cite[Ch.5]{Nadav}).

\section{Coding Theorem for Known Channel} \label{sec:CodingThmKnownChannel}
We will now use the coding scheme developed in Section \ref{sec:SequentialDecoder} to prove the main result for rateless channel coding. For a fixed error probability, we will obtain an achievable rate using rateless codes. Then, we will prove that this rate is within $O(\log \log M / \log M)$ from the optimal rate achievable with this error probability. Before we get to the main theorem, we prove the following lemma, which facilitates some refinement in the achievable rate.

\begin{lemma} \label{Lemma1}
Suppose that an $(R,M,\epsilon)$-code exists for a DMC. Then for any $0<\alpha<1$, there also exists an $(R',M,\epsilon')$-code for the same channel, where
\begin{eqnarray}
    R' & = & (1-\alpha)^{-1}R \\
    \epsilon' & = & \alpha + \epsilon - \alpha \epsilon
\end{eqnarray}

\end{lemma}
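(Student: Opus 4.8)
The plan is to construct the new code from the old one by artificially "aborting" the decoding with probability $\alpha$ at the very first channel use. Concretely, given the $(R,M,\epsilon)$-code, I would modify the decoder so that before it even looks at $Y_1$, it flips a biased coin: with probability $\alpha$ it immediately stops at time $n=1$ and outputs a uniformly random message from $\mathcal{W}$ (this is the "abort" branch, incurring an essentially certain error among those messages it is not meant to send), and with probability $1-\alpha$ it runs the original decoder unchanged. The encoding side is unaffected — the encoder still transmits the infinite codeword and, via decision feedback, learns when a decision has been made. This randomization is legitimate because the definition of a rateless code places no restriction forbidding the decoder from using auxiliary randomness, and because $\E\{T\}$ in Definition \ref{def:EffectiveRate} already averages over channel realizations, so averaging additionally over the coin flip is harmless.

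The key computations are then two short ones. First, the new stopping time $T'$ satisfies $\E\{T'\} = \alpha \cdot 1 + (1-\alpha)\E\{T\} \leq (1-\alpha)\E\{T\}$, since $\E\{T\}\geq 1$ always. Hence
\begin{equation}
    R' = \frac{\log M}{\E\{T'\}} \geq \frac{\log M}{(1-\alpha)\E\{T\}} = (1-\alpha)^{-1}R,
\end{equation}
and since any rate up to $R'$ is achievable we may take $R' = (1-\alpha)^{-1}R$ exactly. Second, the new error probability: conditioned on the abort branch the error event occurs unless the uniformly chosen message happens to equal the true one, which has probability $1/M \leq 1$, so $\Pr\{\text{error}\mid\text{abort}\} \leq 1$; conditioned on the no-abort branch the error probability is at most $\epsilon$ by hypothesis. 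Therefore
\begin{equation}
    P_e' \leq \alpha \cdot 1 + (1-\alpha)\epsilon = \alpha + \epsilon - \alpha\epsilon = \epsilon',
\end{equation}
which is exactly the claimed bound. (One can in fact bound the abort-branch error by $1 - 1/M$, giving a marginally stronger $\epsilon'$, but the looser bound $\epsilon' = \alpha + \epsilon - \alpha\epsilon$ is the clean statement and suffices.)

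There is no serious obstacle here; the only thing to get right is the bookkeeping that $\E\{T'\}$ is a genuine mixture so that the inequality goes in the favorable direction, and the observation that $\E\{T\}\geq 1$ (true since $T\geq 1$ deterministically for any rateless code), which is what lets us replace $\alpha + (1-\alpha)\E\{T\}$ by the cleaner upper bound $(1-\alpha)\E\{T\}$. The mild conceptual point worth stating explicitly in the write-up is why deliberately degrading a code is useful at all: trading a little error probability for a higher effective rate gives a one-parameter family of operating points, and in the next theorem one optimizes over $\alpha$ to shave the achievable rate down toward the converse. I would close by remarking that this is a pure "code transformation" lemma — it uses nothing about the channel or the decoder internals, only the definitions of effective rate and of an $(R,M,\epsilon)$-code.
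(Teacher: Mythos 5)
Your overall strategy---randomized early termination at the decoder, trading error probability for rate---is exactly the paper's construction, and your error-probability bookkeeping ($P_e' \leq \alpha + (1-\alpha)\epsilon = \alpha + \epsilon - \alpha\epsilon$) is fine. However, there is a genuine error in the rate computation. You have the abort branch stop at time $n=1$, so your stopping time satisfies $\E\{T'\} = \alpha\cdot 1 + (1-\alpha)\E\{T\}$, and you then claim this is $\leq (1-\alpha)\E\{T\}$ ``since $\E\{T\}\geq 1$.'' That inequality is false: since $\alpha>0$, you always have $\alpha + (1-\alpha)\E\{T\} > (1-\alpha)\E\{T\}$; the fact that $\E\{T\}\geq 1$ only gives you the much weaker $\E\{T'\}\leq \E\{T\}$. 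Consequently your construction yields effective rate $\log M/\bigl(\alpha + (1-\alpha)\E\{T\}\bigr)$, which is strictly smaller than the claimed $(1-\alpha)^{-1}R$, and the subsequent remark that ``any rate up to $R'$ is achievable'' cannot rescue this, because the rate you actually achieve sits below the target, not above it.

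The fix is the one the paper uses implicitly: in the abort branch the decoder declares an error \emph{before any channel symbol is transmitted} (via the feedback link the encoder is told to skip directly to the next message), so the abort branch consumes zero channel uses and $\E\{T'\} = (1-\alpha)\E\{T\}$ holds with equality, giving $R' = (1-\alpha)^{-1}R$ exactly. With that single change in where the coin flip takes effect, the rest of your argument (the legitimacy of decoder randomization, the error-probability mixture bound, and the closing remark that this is a pure code-transformation lemma independent of the channel) goes through and matches the paper's proof.
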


\begin{proof}
To show that the triplet $(R',M,\epsilon')$ is achievable, we use the original code with randomized decision-making at the decoder. For each transmitted message, the decoder either terminates the transmission immediately and declares an error, with probability $\alpha$, or uses the original decision rule. Denote the stopping time of the original decoder and the modified one by $T$ and $T'$, respectively. The expected decision time of the modified decoder is
\begin{equation}
    \E\{T'\} = (1-\alpha) \E\{T\},
\end{equation}
which implies
\begin{equation}
    R' = (1-\alpha)^{-1} R.
\end{equation}
The error event in the modified scheme is a union of two non-mutually-exclusive events: error in the original decoder and the event of early termination. The probability of this union is
\begin{equation}
    \epsilon' = \alpha + \epsilon - \alpha \epsilon.
\end{equation}
Finally, we note that the number of messages in the codebook remains unchanged---which completes the proof of the lemma.
\end{proof}

\begin{theorem}
For rateless codes, the following rate is achievable:
\begin{equation} \label{eq:AchievableRate}
   R' = \begin{cases}
                \frac{1 - 1/\log M}{1+ \frac{C + \log \log M}{\log M}} \cdot \frac{C}{1-\epsilon} & \epsilon > 1/\log M \\
                \frac{C}{1+ \frac{C - \log \epsilon}{\log M}} & \epsilon \leq 1/\log M \\
              \end{cases}
\end{equation}
\end{theorem}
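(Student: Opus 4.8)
The statement splits into two regimes, and the second one, $\epsilon \le 1/\log M$, is nothing but Theorem~\ref{Theorem1} restated: the decoder of \eqref{eq:ChannelDecoderLog} with threshold $a = \log M - \log\epsilon$ already delivers exactly that effective rate with $P_e \le \epsilon$. So the only real work is in the regime $\epsilon > 1/\log M$, where the target rate has the Burnashev/Polyanskiy shape $\frac{C}{1-\epsilon}$ multiplied by a factor that tends to $1$ as $M\to\infty$.

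The plan for that regime is to feed Theorem~\ref{Theorem1} into Lemma~\ref{Lemma1}, running the latter \emph{forward} to trade rate for error probability. First I would invoke Theorem~\ref{Theorem1} with a conservative target error probability $\epsilon_0 < \epsilon$ to obtain an $(R_0,M,\epsilon_0)$-code with $R_0 = \frac{C}{1 + (C - \log\epsilon_0)/\log M}$. Then I would apply Lemma~\ref{Lemma1} to this code: the randomized early-termination device inflates the error probability from $\epsilon_0$ to $\epsilon' = \alpha + \epsilon_0 - \alpha\epsilon_0$ while multiplying the rate by $(1-\alpha)^{-1}$, since $\E\{T'\}=(1-\alpha)\E\{T\}$. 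Solving $\alpha(1-\epsilon_0) + \epsilon_0 = \epsilon$ for $\alpha$ gives $\alpha = \frac{\epsilon - \epsilon_0}{1-\epsilon_0}$, hence $1-\alpha = \frac{1-\epsilon}{1-\epsilon_0}$, so the resulting code is an $(R',M,\epsilon)$-code with $R' = \frac{1-\epsilon_0}{1-\epsilon}\cdot R_0$.

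It remains only to choose $\epsilon_0$. Taking $\epsilon_0 = 1/\log M$ turns $-\log\epsilon_0$ into $\log\log M$ and $1-\epsilon_0$ into $1 - 1/\log M$, which reproduces the first branch of \eqref{eq:AchievableRate} verbatim. The admissibility condition of Lemma~\ref{Lemma1}, namely $0 < \alpha < 1$, is precisely $\epsilon_0 < \epsilon < 1$, i.e. $1/\log M < \epsilon < 1$; this is exactly why the hypothesis of this branch is $\epsilon > 1/\log M$, and why the boundary value $\epsilon = 1/\log M$ (at which $\alpha = 0$, excluded by the lemma) is absorbed into the second branch instead.

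I do not expect a genuine obstacle here: once one sees that Lemma~\ref{Lemma1} is the right tool for converting a low-error, low-rate base code into an $\epsilon$-error code at a better rate, the argument is pure bookkeeping. The only substantive design decision is the value $\epsilon_0 = 1/\log M$; any choice polynomially small in $M$ yields a rate of the same first-order form, and $1/\log M$ is simply the cleanest one making the correction term $(\log\log M)/\log M$ appear while keeping the multiplicative loss $1 - 1/\log M$ vanishing. I would close by remarking that, for fixed $\epsilon$ and $M\to\infty$, this rate tends to $\frac{C}{1-\epsilon}$, in agreement with \eqref{eq:PolyanskiyUpperLower}.
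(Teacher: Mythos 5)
Your proposal is correct and matches the paper's own proof: the paper likewise applies Lemma~\ref{Lemma1} to the code of Theorem~\ref{Theorem1} with intermediate error probability $\delta$, sets $\alpha = \frac{\epsilon-\delta}{1-\delta}$, and then picks $\delta = \min\{\epsilon, 1/\log M\}$, which is exactly your case split with $\epsilon_0 = 1/\log M$ in the first regime and no randomization in the second.
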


We note that if $\epsilon$ is fixed and $M$ is large enough so that $\epsilon > 1/\log M$, the achievable rate has the following asymptotics:
\begin{equation} \label{eq:AchievableRateAsym}
    R' = \frac{C}{1-\epsilon} \cdot \left(1 - O \left( \frac{\log \log M}{\log M} \right) \right)
\end{equation}
\begin{proof}
Theorem \ref{Theorem1} implies that the triplet $(R,M,\delta)$ is achievable for all $0 < \delta < 1$, where
\begin{equation}
    R = \frac{C}{1+ \frac{C - \log \delta}{\log M}}
\end{equation}
By Lemma \ref{Lemma1}, we can also achieve $(R',M,\delta')$, where
\begin{eqnarray}
    R' & = & \frac{C}{(1-\alpha)\left(1+ \frac{C - \log \delta}{\log M}\right)} \\
    \delta' & = & \alpha + \delta - \alpha \delta
\end{eqnarray}
for all $0 < \alpha < 1$. By choosing
\begin{equation}
    \alpha = \frac{\epsilon - \delta}{1 - \delta}
\end{equation}
we obtain
\begin{eqnarray}
    R' & = & \frac{1 - \delta}{1+ \frac{C - \log \delta}{\log M}} \cdot \frac{C}{1-\epsilon} \\
    \delta' & = & \epsilon
\end{eqnarray}
Since the foregoing analysis holds for all $0 < \delta < \epsilon$, we can choose $\delta = \min\{\epsilon, 1/\log M\}$ to obtain \eqref{eq:AchievableRate}.
\end{proof}

\begin{remark}
If $\epsilon \leq 1/ \log M$, the choice $\delta = \epsilon$ implies $\alpha = 0$, that is, no randomization at the decoder. This result could be anticipated, since the randomized decoder trades rate for reliability: it obtains a better effective rate with some compromise on the error probability. Hence, whenever the error probability constraint is more important than the working rate -- randomization can only worsen matters.
\end{remark}

\section{Error Exponent}
Theorem 2 in the previous section provides a relation between the working rate and the allowed error probability. We will now investigate this dependency in the regime of low error probability by developing the error exponent induced by this coding scheme. Assuming that a low error probability is required, randomization at the decoder is inapplicable, so \eqref{eq:AchievableRate} can be rewritten as
\begin{equation}
    - \frac{R}{\log M} \log \epsilon = C - R - \frac{CR}{\log M}
\end{equation}

Recall that $R = \log M / \E\{T\}$, so
\begin{equation}
    - \frac{\log \epsilon}{\E\{T\}} = C - R - \frac{CR}{\log M} \triangleq E(R)
\end{equation}

We can see that the error exponent is a \emph{linear} function of the rate, which is also the case in Burnashev's analysis \eqref{eq:BurnasheErrorExp} (albeit with a different coefficient). Furthermore, as $M$ grows, the error exponent converges to $C-R$ and the convergence is dominated by a term of order $O(1/\log M)$, or $O(1/K)$. This term can be interpreted as a penalty for using a finite message set.

\section{Weak Converse}
In the previous section we have seen that if we use a codebook with $M$ messages and allow an error probability $P_e \leq \epsilon$, then we can achieve an effective rate with the following asymptotics:
\begin{equation}
    R' = \frac{C}{1-\epsilon} \cdot \left(1 - O \left( \frac{\log \log M}{\log M} \right) \right)
\end{equation}

We will now prove that under the above constraints on the message set and the error probability, the best achievable rate has the same asymptotics. In other words, the achievable rate at \eqref{eq:AchievableRate} converges to the optimal rate, and the convergence is dominated by a term of order $O(1/\log M)$.

\begin{theorem}
Given a decoder with random \footnote{Fixed stopping time is a private case of random stopping time, in which $T$ takes only one value.} stopping time $T$, any rate for which the probability of error does not exceed $\epsilon$ satisfies
\begin{equation}
    R' \leq \frac{C}{1-\epsilon} \cdot \left(1 + O \left( \frac{1}{\log M} \right) \right).
\end{equation}
\end{theorem}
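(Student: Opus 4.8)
The plan is to derive a lower bound on $\E\{T\}$ using the data-processing inequality for relative entropy, much in the spirit of Burnashev's converse. The starting point is the observation that the pair $(\hat W, \{\text{error}\})$ is a (deterministic) function of $Y^T$, so by the data-processing inequality applied to the two hypotheses ``message $w$ was sent'' versus ``message $w'$ was sent,'' the relative entropy accumulated in the channel outputs up to the stopping time must dominate the binary relative entropy between the corresponding error indicators. More precisely, I would introduce a genie-aided argument: condition on $W=w$, and compare with an auxiliary distribution under which the decoder's output is essentially uniform, so that the binary divergence between ``decode correctly'' and ``decode incorrectly'' is of order $\log M$ when $P_e\le\epsilon$. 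Concretely one gets a bound roughly of the form $\E\{T\}\cdot C \ge (1-\epsilon)\log M - h(\epsilon) - O(1)$, where $h(\cdot)$ is the binary entropy; this is the one-phase analogue of the $\frac{\ell C}{1-\epsilon}$ term appearing in \eqref{eq:PolyanskiyUpperLower}.

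The key steps, in order, are: (i) fix the correct message $w$ and set up Wald-type / martingale bookkeeping for the log-likelihood process, noting that the expected per-step relative entropy between the true output law and the output law under a wrong codeword is at most $C$ (this uses that $q$ is the capacity-achieving prior and that $I(X;Y)=C$ bounds the relevant divergence); (ii) apply the data-processing inequality for $D(\cdot\|\cdot)$ to the function $Y^T\mapsto \mathbf 1\{\hat W=w\}$, obtaining $\E\{T\}\cdot C \ge d(1-\epsilon\,\|\,\beta)$ for the appropriate reference probability $\beta=\Theta(1/M)$ of the wrong codeword crossing, where $d(\cdot\|\cdot)$ is binary divergence; (iii) lower bound $d(1-\epsilon\|\beta) \ge (1-\epsilon)\log\frac1\beta - h(\epsilon) \ge (1-\epsilon)\log M - O(1)$; (iv) rearrange into $R' = \frac{\log M}{\E\{T\}} \le \frac{C}{1-\epsilon}\cdot\frac{1}{1 - O(1/\log M)} = \frac{C}{1-\epsilon}\bigl(1+O(1/\log M)\bigr)$, absorbing the additive $O(1)$ and $h(\epsilon)$ terms into the multiplicative $O(1/\log M)$ correction since $\log M$ grows.

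The main obstacle I anticipate is handling the \emph{random} stopping time rigorously in step (ii): because $T$ depends on $Y^T$, one cannot naively write the accumulated divergence as $\E\{T\}\cdot C$ at a fixed horizon, and a careless application of data processing at a data-dependent time is not valid. The clean fix is to work with the stopped log-likelihood-ratio martingale and invoke the optional stopping theorem (justifying integrability via Wald's identity, exactly as in the proof of Theorem \ref{Theorem1}), or equivalently to apply the data-processing inequality at each fixed time $n$ and then average against the distribution of $T$. A secondary subtlety is that the decoder's output also includes the ``declare error when two codewords cross'' event; I would bound its contribution separately and show it only affects lower-order terms. Once the random-time issue is dispatched, the remaining algebra is the routine rearrangement in step (iv), so I would not belabor it.
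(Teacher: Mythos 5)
Your proposal is correct in outline, but it takes a genuinely different route from the paper. The paper's proof is a short Fano-type argument: define $\mu(n)=H(W|Y^n)+nC$, note via \cite[Lemma 2]{Burnashev} that $\mu(n)$ is a submartingale so optional stopping gives $\log M \le \E\{H(W|Y^T)\}+C\,\E\{T\}$, then bound $\E\{H(W|Y^T)\}< 1+\epsilon\log M$ by Fano's inequality (\cite[Lemma 1]{Burnashev}), which yields $C\,\E\{T\}>(1-\epsilon)\log M-1$ and the stated asymptotics after rearranging. You instead run a meta-converse: change measure to an auxiliary law under which $Y^T$ is independent of $W$, apply data processing of divergence to the correctness indicator to get $\E\{T\}\,C \ge d(1-\epsilon\,\|\,1/M)\ge(1-\epsilon)\log M-h(\epsilon)$, and rearrange --- the same final algebra. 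Two cautions for your route. First, for the per-step divergence bound to hold you need it uniformly in the transmitted symbol, not just on average: take the auxiliary measure to be the i.i.d.\ capacity-achieving \emph{output} distribution and invoke $D(p(\cdot|x)\,\|\,q^*_Y)\le C$ for every $x$ (the saddle-point/KKT property), rather than ``the output law under a wrong codeword,'' which ties the argument to the particular random codebook; the theorem is a converse for arbitrary codes and decoders, as in the paper's proof. Second, your worry about the random horizon is legitimate, and your fix (optional stopping applied to the stopped log-likelihood process, or averaging fixed-time data-processing bounds over $T$) is the right one; note this is precisely the difficulty the paper sidesteps by choosing the submartingale $H(W|Y^n)+nC$. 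The tie-breaking/``declare error'' event needs no separate treatment, since it is simply part of the decoder map $g_T(Y^T)$. What your route buys is the explicit non-asymptotic machinery of \cite{PPV} (with $h(\epsilon)$ in place of the paper's additive $1$, and the potential for sharper refinements); what the paper's route buys is brevity and a clean handling of the random stopping time.
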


\begin{proof}
Define
\begin{equation}
    \mu(n) = H(W|Y^n) + nC.
\end{equation}
By \cite[Lemma 2]{Burnashev}) we have
\begin{equation}
    \E \{\mu(n+1)|Y^n\} - \mu(n) = \E \{H(W|Y^{n+1})-H(W|Y^n)|Y^n\} + C \geq 0
\end{equation}
which implies that $\mu(n)$ is a submartingale with respect to the process $\{Y_k\}_{k=1}^{\infty}$. Therefore we have
\begin{align}
    \log M &= H(W) = \mu(0) \nonumber \\
           &\leq \E \{\mu(T)\} \nonumber \\
           &= \E \{H(W|Y^T)\} + C \cdot \E\{T\} \label{eq:SubmartingalIneq}
\end{align}
Furthermore, by \cite[Lemma 1]{Burnashev} we have
\begin{align}
    \E\{H(W|Y^T)\} &\leq h \left(P_e\right) + P_e \cdot \log (M-1) \nonumber \\
             &< 1 + \epsilon \cdot \log M \label{eq:Fano}
\end{align}
where \eqref{eq:Fano} follows from the requirement $P_e \leq \epsilon$, and from an upper bound on the binary entropy function. Combining \eqref{eq:SubmartingalIneq} and \eqref{eq:Fano} we obtain
\begin{equation} \label{eq:BoundLogM}
    \log M < 1 + \epsilon \cdot \log M + C \cdot \E\{T\}
\end{equation}
which implies
\begin{equation} \label{eq:RPrimeWithET}
    R' = \frac{\log M}{\E\{T\}} < \frac{C}{1-\epsilon} \cdot \left(1 + \frac{1}{C \cdot \E\{T\}}\right)
\end{equation}
Furthermore, from \eqref{eq:BoundLogM} we can see that
\begin{equation}
    C \cdot \E\{T\} > (1 - \epsilon) \cdot \log M - 1
\end{equation}
and therefore \eqref{eq:RPrimeWithET} can be replaced by
\begin{equation}
    R' = \frac{\log M}{\E\{T\}} \leq \frac{C}{1-\epsilon} \cdot \left(1 + O\left(\frac{1}{\log M}\right)\right) \label{eq:UpperBound}
\end{equation}
\end{proof}

\begin{remark}
While \eqref{eq:AchievableRate} approaches \eqref{eq:UpperBound} for large $M$, the upper bound is not tight for a finite $M$. Note that the converse used here is ``weak'', in that it is based on Fano's inequality, which is known to be loose in many cases. We conjecture that a strong converse can be found, which will be tighter (i.e. closer to \eqref{eq:AchievableRate}) even in the non-asymptotic realm.
\end{remark}

\begin{remark}
Equation \eqref{eq:AchievableRateAsym}, the achievable rate, is essentially equivalent to the left-hand side of \cite[Eq.18]{PPV}, and equation \eqref{eq:UpperBound}, the upper bound on the rate, is equivalent to the right-hand side of that equation. Note, however, that the formulation is slightly different: in \cite{PPV} size of the message set $M$ is optimized with constraint on the maximal transmission time, while here $M$ is fixed and the transmission time is minimized.
\end{remark}

\section{Further Discussions}
\subsection{Application for Gaussian Channels}
While the analysis in Sections \ref{sec:SequentialDecoder} and \ref{sec:CodingThmKnownChannel} is done for discrete channels, it can be easily extended to memoryless Gaussian channels. Suppose that $X_t$ and $Y_t$ are the input and output of an additive white Gaussian noise channel at time instant $t$, i.e.
\begin{equation} \label{eq:AWGNDef}
    Y_t = X_t + V_t, \qquad t=1,2,\ldots
\end{equation}
where $\{V_t\}_{t=1}^\infty$ is a sequence of i.i.d. Gaussian RV's with zero mean and a known variance. The encoding and the decoding processes, as well as the expression for the resulting effective rate, are similar to those of the DMC, where $q(\cdot)$ is the codebook generation PDF and $p(\cdot|\cdot)$ is the transition PDF of the backward channel.

Specifically, consider the above-described setting where $V_k \sim N(0,\theta)$. Suppose that the codebook is Gaussian with power constraint $P$, i.e. $C_{m,k} \sim N(0,P)$ for all $m,k$. (Here again, $C_{m,k}$ is the $k$-th symbol of the $m$-th codeword.) The decoding rule is given by \eqref{eq:ChannelDecoderLin}, where
\begin{align}
    p(x|y) &= \left( 2\pi \condvar \right)^{-1/2} \exp \left\{ -\frac{1}{2 \cdot \condvar} \left( x - \Wiener \cdot y \right)^2 \right\} \\
    q(x) &= \left( 2\pi P \right)^{-1/2} \exp \left\{ -\frac{x^2}{2P} \right\}
\end{align}
The effective rate of the decoder is given in \eqref{eq:AchievableRateChannelDec}, where
\begin{equation}
    C = \frac{1}{2} \log \left( 1 + \frac{P}{\theta}\right)
\end{equation}

\subsection{Limited Feedback Channel}
In the forgoing analysis, we assumed that the feedback channel must be used once per each main channel use. In practice, however, it may be desirable to reduce the amount of data transmitted over the feedback channel. For instance, in the case of broadcasting to multiple users, the upstream channel may have a more stringent bandwidth constraint as it must be accessed by all users. It is therefore interesting to see how lowering the frequency of the feedback affects the performance of the rateless coding scheme. Suppose that we want to use the feedback channel only once per $s$ received symbols. The maximal number of excess symbols transmitted over the main channel (i.e. the number of symbols transmitted after a decoder without feedback limitation would acknowledge the message) is $s-1$, which implies an effective rate of
\begin{equation}\label{eq:RateLimitedFB}
    R = \frac{C}{1+ \frac{(s-1)C - \log \epsilon}{\log M}}
\end{equation}
From \eqref{eq:RateLimitedFB} we see that limiting the feedback frequency has negligible effect if either $s \ll (-\log \epsilon)/C$ or $s \ll (\log M)/C$. In the former case, the required confidence level is high, and in the latter case the messages are long. That is, in both cases the codewords are long with respect to the capacity of the channel, which implies long transmission time. Therefore, in both cases the excess decoding time is small compared to the entire transmission length, and the effect of the limiting the feedback is negligible.

\chapter{Rateless Coding -- Unknown Channel}
\label{ch:UnknownChannel}
In Chapter \ref{ch:KnownChannel} we assumed that the communication channel, characterized by $p(y|x)$, is known at the receiver end. Assume now, that the underlying channel is unknown to the receiver. The capacity of the channel is known to be achievable in this scenario using sequential versions of the Maximal Mutual Information (MMI) decoder \cite{Nadav}, \cite{Telatar}. However, while these schemes provide reliable communication at rate equal to the channel capacity, they assume that the size of the message set $M$ is infinite. In this chapter we try to answer the question whether universal communication is feasible with a finite message set, and if it is, what rates are achievable? As we shall see shortly, it is possible to achieve reliable communication over an unknown channel even when the message set is finite, and we can also bound the rate degradation due to lack of information about the channel law.

\section{Achievable Rate for an Unknown Channel} \label{sec:RateUnknownChannel}
Suppose that we wish to communicate over a DMC with unknown (backward) transition probabilities
\begin{equation}
    \theta_{ij} = \Pr\{X=i|Y=j\}, \qquad i = 1,\ldots,|\mathcal{X}| \qquad j = 1,\ldots,|\mathcal{Y}|
\end{equation} \label{eq:ThetaParamsDef}
We use a coding scheme similar to the one described in Chapter \ref{ch:KnownChannel} with the following modification. Instead of using the true transition probability $p_{\tv}(x^t|y^t)$, which is unknown to the decoder, we use a \emph{universal} probability assignment defined as
\begin{equation}\label{eq:UniversalProb}
    p_U (x^t|y^t) \triangleq \int_{\Lambda} w(\tv') p_{\tv'} (x^t|y^t) d\tv'
\end{equation}
where
\begin{equation}
    \Lambda = \left\{\tv' \in [0,1]^{\XY} \ | \ \sum_{i=1}^{|\mathcal{X}|} \theta'_{ij} = 1, \quad j = 1,\ldots,|\mathcal{Y}|\right\}
\end{equation}
and the weight function $w(\cdot)$ is chosen to be Jeffreys Prior \footnote{This is also a special case of Dirichlet Distribution.}, i.e.
\begin{equation} \label{eq:JeffreysPrior}
    w(\tv') = \frac{1}{\BXY \sqrt{\prod_{i,j} \theta'_{ij}}}
\end{equation}
where
\begin{equation} \label{eq:DefBXY}
    \BXY = \int_{\Lambda} \frac{d\tv'}{\sqrt{\prod_{i,j} \theta_{ij}}}
\end{equation}

\begin{remark}
While the unknown channel is usually characterized by a set of transition probabilities
\begin{equation*}
    \tilde{\theta}_{ij} = \Pr\{Y=j|X=i\}, \qquad i = 1,\ldots,|\mathcal{X}| \qquad j = 1,\ldots,|\mathcal{Y}|
\end{equation*}
the entire derivation here is done for the \emph{backward} channel parameterization given in \eqref{eq:ThetaParamsDef}. However, this does not need to bother us since the entire analysis assumes a known input prior $q(x)$, and therefore given $\{\tilde{\theta}_{ij}\}$, the parameters in \eqref{eq:ThetaParamsDef} are well-defined. Moreover, the region $\tilde{\Lambda}$, induced by $\{\tilde{\theta}_{ij}\}$ and $q(x)$, is clearly contained in the region $\Lambda$. Therefore, if a coding scheme is universal with respect to all possible realizations of the backward channel, it is also universal w.r.t. all possible realizations of the forward channel.
\end{remark}

The universal probability assignment implies the following decoding rule, which is the universal counterpart of \eqref{eq:ChannelDecoderLin}:
\begin{equation} \label{eq:ChannelDecoderUnv}
    g_n(y^n)= \begin{cases}
                w, \ p_U(\mathbf{c}_w|y^n) \geq A \cdot q(\mathbf{c}_w) \\
                0, \ \text{if no such $w$ exists}
              \end{cases}
\end{equation}

In Chapter \ref{ch:KnownChannel} we used Wald's Identity to bound the expected transmission time, thereby obtaining an effective rate for the sequential decoder. Unfortunately, in the universal case $p_U(\cdot|\cdot)$ is not necessarily multiplicative, so $\log p_U(\cdot|\cdot)$ cannot be expressed as the sum of i.i.d. random variables. Therefore, the expected transmission time in the universal case cannot be calculated directly by applying Wald's identity. Nevertheless, as we shall see shortly, we can use the results for the known channel case to obtain an upper bound for the transmission time in the universal case.

The following lemma shows that given two sequences $x^t$ and $y^t$, the universal metric cannot be too far from the conditional probability assignment that is optimally fitted to $x^t$ and $y^t$.
\begin{lemma} \label{thm:UnvProbLemma}
For any two series $x^t$ and $y^t$ we have
\begin{equation}
    \log \frac{p_{\htv} (x^t|y^t)}{p_U (x^t|y^t)} \leq \frac{\left(\XX-1\right)\YY}{2} \log \frac{t}{2\pi} + \YY \Lkappa_{\XX} + \left( \frac{\XX^2 \YY}{4} + \frac{\XY}{2} \right) \log e
\end{equation}
where
\begin{equation} \label{eq:ThetaOpt}
    \htv = \arg \max_{\tv' \in \Lambda} p_{\tv'} (x^t|y^t)
\end{equation}
and we define
\begin{equation}
    \Lkappa_{\XX} = \log \frac{\Gamma(1/2)^{\XX}}{\Gamma(\XX/2)}
\end{equation}
\end{lemma}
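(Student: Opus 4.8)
The plan is to bound the redundancy of the mixture $p_U$ relative to the best-fitting product form $p_{\htv}$ by the standard information-theoretic argument for Dirichlet/Jeffreys mixtures, adapted to the fact that the parameter space $\Lambda$ factors over the output symbols $j \in \{1,\ldots,\YY\}$. First I would exploit this product structure: conditioned on a fixed output string $y^t$, let $n_j$ be the number of indices $k$ with $y_k = j$, and for each such $j$ let the corresponding input symbols form a substring of length $n_j$. Since the channel is memoryless in the backward direction and the parameters $\{\theta_{ij}\}_i$ for different $j$ are variationally independent, both $p_{\tv'}(x^t|y^t)$ and the Jeffreys prior $w(\tv')$ factor as products over $j$, and hence $p_U(x^t|y^t) = \prod_{j=1}^{\YY} p_U^{(j)}(\text{input substring for } j)$, where $p_U^{(j)}$ is precisely the Krichevsky–Trofimov (KT) / Jeffreys mixture estimator for an i.i.d. source over the $\XX$-letter alphabet, based on $n_j$ samples. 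Likewise $\htv$ decouples: $\htv_{\cdot j}$ is just the empirical type of the input substring for output $j$. So the claimed bound will follow by summing, over the $\YY$ output letters, a per-letter redundancy bound for the KT estimator with sample size $n_j$, and then using $\sum_j n_j = t$ together with concavity of $\log$ to replace each $\log(n_j/2\pi)$ by $\log(t/2\pi)$ (this is where the $\YY/2 \cdot (\XX-1)\log(t/2\pi)$ term and the somewhat crude leading constant come from).

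The core estimate is therefore the single-alphabet statement: for an alphabet of size $m=\XX$ and $n$ i.i.d. observations with empirical type $\hat\theta$, the KT mixture $p_{KT}$ satisfies
\begin{equation}
    \log \frac{p_{\hat\theta}(x^n)}{p_{KT}(x^n)} \leq \frac{m-1}{2}\log \frac{n}{2\pi} + \log\frac{\Gamma(1/2)^m}{\Gamma(m/2)} + \left(\frac{m^2}{4} + \frac{m}{2}\right)\frac{\log e}{?},
\end{equation}
with the constants distributed across the $\YY$ copies so as to reproduce the statement (the $\YY\Lkappa_{\XX}$ term is exactly $\YY$ copies of $\log(\Gamma(1/2)^m/\Gamma(m/2))$, and the $(\XX^2\YY/4 + \XY/2)\log e$ term collects the $\YY$ copies of an $O(1)$ remainder plus the $\XY/2 \cdot \log e$ slack from the $\log(n_j/2\pi) \le \log(t/2\pi)$ step applied coordinate-wise). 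To prove this per-letter bound I would write $p_{KT}(x^n) = \int_{\Delta_m} w(\theta)\, \theta_1^{k_1}\cdots\theta_m^{k_m}\, d\theta$ with $w$ the Dirichlet$(1/2,\ldots,1/2)$ density and $k_i = n\hat\theta_i$, evaluate this Dirichlet integral in closed form via Gamma functions, and then apply a non-asymptotic Stirling bound (with explicit error control) to $\log(p_{\hat\theta}(x^n)/p_{KT}(x^n)) = \sum_i \log\frac{k_i^{k_i}}{\Gamma(k_i+1/2)} \cdot (\text{normalizers})$. Care is needed with the letters for which $k_i = 0$, which contribute no Stirling term but do contribute to the $\Gamma$-normalizer; the bound $\Gamma(k_i+1/2) \ge \Gamma(1/2) = \sqrt\pi$ for $k_i \ge 0$ handles them, while for $k_i \ge 1$ a two-sided Stirling estimate of the form $\sqrt{2\pi}\, z^{z+1/2}e^{-z} \le \Gamma(z+1) \le \sqrt{2\pi}\, z^{z+1/2}e^{-z}e^{1/(12z)}$ gives the main $\frac{m-1}{2}\log\frac{n}{2\pi}$ term once one sums $\log k_i \le \log n$ and uses $\sum_i k_i = n$.

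The main obstacle I expect is bookkeeping rather than conceptual: getting the explicit (non-asymptotic) constants to come out as the clean closed forms $\YY\Lkappa_{\XX}$ and $(\XX^2\YY/4 + \XY/2)\log e$, rather than as an unspecified $O(\cdot)$, requires handling several case distinctions uniformly — in particular the empty-count letters, the coarse bound $n_j \le t$ inside the logarithm, and the accumulation of the $1/(12k_i)$-type Stirling remainders, which must be summed to something bounded by $\frac{\XX^2}{4}\log e$ per output letter (a crude but valid bound, e.g. via $\sum_i 1/(12 k_i) \le m/12$ when all nonzero $k_i \ge 1$, then loosened to match the stated constant). Once the single-letter bound is in hand with all constants explicit, the reduction via the product structure over $\mathcal{Y}$ and the inequality $\log p_U(x^t|y^t) = \sum_j \log p_U^{(j)} \ge \sum_j \log p_{\htv_{\cdot j}}^{(j)} - (\text{per-letter redundancy})$ assembles the lemma directly.
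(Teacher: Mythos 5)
Your proposal is correct and follows essentially the same route as the paper: both exploit the factorization of $p_{\htv}$ and of the Jeffreys mixture $p_U$ over the $\YY$ output letters, reduce to a single-alphabet Jeffreys/KT redundancy bound per output letter, and sum over $j$ (bounding each per-letter sample size by $t$). The only difference is that the paper simply invokes the known per-letter bound \cite[Lemma 1]{Barron} with exactly the constants $\frac{\XX-1}{2}\log\frac{t}{2\pi} + \Lkappa_{\XX} + \bigl(\frac{\XX^2}{4}+\frac{\XX}{2}\bigr)\log e$, whereas you propose to re-derive it via the closed-form Dirichlet integral and explicit Stirling estimates, which is fine but only adds bookkeeping.
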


\begin{proof}
Note that
\begin{align}
    p_{\htv}(x^t|y^t)&= \max_{\{\theta_{i,j}\}} \prod_{i,j} \theta_{i,j}^{N(x^t,y^t;i,j)} \\
             &= \max_{\{\theta_{i,1}\}} \prod_i \theta_{i,1}^{N(x^t,y^t;i,1)} \cdot \max_{\{\theta_{i,2}\}} \prod_i \theta_{i,2}^{N(x^t,y^t;i,2)} \cdot \ldots \cdot
             \max_{\{\theta_{i,\YY}\}} \prod_i \theta_{i,\YY}^{N(x^t,y^t;i,\YY)}
\end{align}
where
\begin{equation} \label{eq:NxyDef}
    N(x^t,y^t;i,j) = \left| \left\{k \ : \ (x_k,y_k)=(i,j) \right\} \right|
\end{equation}
Since both $w(\cdot)$ and $p_{\tv}$ are multiplicative functions, we also have
\begin{align}
    p_U (x^t|y^t) &= \int_{\Lambda} w(\tv') p_{\tv'} (x^t|y^t) d\tv' \\
                  &= \int_{\tilde{\Lambda}} w(\tilde{\tv}) \prod_i \tilde{\theta}_i^{N(x^t,y^t;i,1)} d\tilde{\tv} \cdot \int_{\tilde{\Lambda}} w(\tilde{\tv}) \prod_i \tilde{\theta}_i^{N(x^t,y^t;i,2)} d\tilde{\tv} \cdot \ldots \\
                  & \cdot \int_{\tilde{\Lambda}} w(\tilde{\tv}) \prod_i \tilde{\theta}_i^{N(x^t,y^t;i,\YY)} d\tilde{\tv}
\end{align}
where
\begin{equation}
    \tilde{\Lambda} = \left\{\tilde{\tv} \in [0,1]^{\XX} \ | \ \sum_{i=1}^{|\mathcal{X}|} \tilde{\theta}_i = 1, \right\}
\end{equation}

From \cite[Lemma 1]{Barron} we know that
\begin{equation}\label{eq:BarronsIneq}
    \log \frac{\max_{\{\theta_{i,j}\}} \prod_i \theta_{i,j}^{N(x^t,y^t;i,j)}}{\int_{\tilde{\Lambda}} w(\tilde{\tv}) \prod_i \tilde{\theta}_{i,j}^{N(x^t,y^t;i,j)} d\tilde{\tv}} \leq \frac{\XX-1}{2} \log \frac{t}{2\pi} + \Lkappa_{\XX} + \left( \frac{\XX^2}{4} + \frac{\XX}{2} \right) \log e
\end{equation}
for all $j = 1,\ldots,\YY$. Thus, we obtain
\begin{align}
    \log \frac{p_{\htv} (x^t|y^t)}{p_U (x^t|y^t)} &= \log \prod_j \frac{\max_{\{\theta_{i,j}\}} \prod_i \theta_{i,j}^{N(x^t,y^t;i,j)}}{\int_{\tilde{\Lambda}} w(\tilde{\tv}) \prod_i \tilde{\theta}_{i,j}^{N(x^t,y^t;i,j)} d\tilde{\tv}} \\
    &\leq \frac{\left(\XX-1\right)\YY}{2} \log \frac{t}{2\pi} + \YY \Lkappa_{\XX} + \left( \frac{\XX^2 \YY}{4} + \frac{\XY}{2} \right) \log e \\
    &=\frac{\left(\XX-1\right)\YY}{2} \log t + \beta
\end{align}
where we define
\begin{equation}\label{eq:BetaDef}
    \beta \triangleq \YY \Lkappa_{\XX} + \left( \frac{\XX^2 \YY}{4} + \frac{\XY}{2} \right) \log e - \frac{\left(\XX-1\right)\YY}{2} \log (2\pi)
\end{equation}
\end{proof}

We are now ready to prove the main theorem for rateless coding over an unknown channel.

\begin{theorem} \label{thm:UnvRate}
For the decoder in \ref{eq:ChannelDecoderUnv} with $P_e \leq \epsilon$, the following effective rate is achievable:
\begin{equation}\label{eq:UnvEffectiveRate}
    R = \frac{C \left( 1 - \frac{\hXY}{\log M \ln 2} \right)}{1 + \frac{C + \beta - \log \epsilon + \frac{\XY}{2} \left( \log \log M - \log C - \frac{1}{\ln 2} \right)}{\log M}}
\end{equation}
\end{theorem}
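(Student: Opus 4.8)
The plan is to replay the proof of Theorem~\ref{Theorem1}, splitting it into a bound on $P_e$---which survives the replacement of the channel law by $p_U$ almost verbatim---and a bound on $\E\{T\}$, which is where the non-multiplicativity of $p_U$ must be dealt with. For the error probability, fix the transmitted message $w$ and a competing codeword with symbols $X_1,X_2,\dots$ drawn i.i.d.\ $\sim q$ and independent of the received sequence $Y_1,Y_2,\dots$. Even though $p_U(\cdot\mid\cdot)$ is not multiplicative, the process $M_t\triangleq p_U(X^t\mid Y^t)/\prod_{k=1}^{t}q(X_k)$ is still a nonnegative martingale of mean $1$: inserting the mixture \eqref{eq:UniversalProb} and conditioning on $(X^{t-1},Y^{t-1})$, each integrand $p_{\tv'}$ contributes the factor $\E\{p_{\tv'}(X_t\mid Y_t)/q(X_t)\mid X^{t-1},Y^{t-1}\}=\sum_{x}p_{\tv'}(x\mid Y_t)=1$, exactly as in \eqref{eq:IndependentX}. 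Replacing the $U_t$ construction of \eqref{eq:uidef} by the version of $M_t$ stopped at the first time it reaches $A$, and invoking Markov's inequality as in \eqref{eq:Markov}, gives $\Pr\{E_w\}\le(M-1)/A$, hence $P_e\le(M-1)/A$; the choice $A=M/\epsilon$, i.e.\ $a=\log M-\log\epsilon$, secures $P_e<\epsilon$.

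The substantive part is bounding $\E\{T\}$. Let $\mathbf{c}_w$ be the transmitted codeword, $Z_k=\log\!\bigl(p(c_{w,k}\mid Y_k)/q(c_{w,k})\bigr)$ the true-channel log-likelihood increments---i.i.d.\ with $\E\{Z\}=C$ by \eqref{eq:EZForWald}---and $S_t=Z_1+\dots+Z_t$. Applying Lemma~\ref{thm:UnvProbLemma} to the length-$t$ prefix of $\mathbf{c}_w$, using $p_{\htv}\ge p_{\tv}$ (the true parameter is feasible in \eqref{eq:ThetaOpt}) together with $(\XX-1)\YY/2\le\hXY$ and $\log t\ge0$, I get
\[
    \log\frac{p_U(\mathbf{c}_w\mid Y^t)}{\prod_{k=1}^{t}q(c_{w,k})}\ \ge\ S_t-\hXY\log t-\beta ,
\]
with $\beta$ as in \eqref{eq:BetaDef}. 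Hence the decoder \eqref{eq:ChannelDecoderUnv} has fired by the first time $T'$ at which $S_t\ge a+\hXY\log t+\beta$, so $T\le T'$. Because the barrier grows only logarithmically while $S_t$ has positive drift $C$, $\E\{T'\}<\infty$, and Wald's identity applied to the martingale $S_t-Ct$ gives $C\,\E\{T'\}=\E\{S_{T'}\}$. Using $S_{T'-1}<a+\hXY\log(T'-1)+\beta\le a+\hXY\log T'+\beta$, I get $S_{T'}<a+\beta+\hXY\log T'+Z_{T'}$; taking expectations, bounding the overshoot $\E\{Z_{T'}\}\le C$ as in \eqref{eq:SumZForWald}--\eqref{eq:WaldC}, and using Jensen ($\E\{\log T'\}\le\log\E\{T'\}$), I reach the self-referential inequality $C\,\E\{T'\}<a+\beta+C+\hXY\log\E\{T'\}$.

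The final step is to solve this for $\E\{T'\}$ and divide. The single non-routine move is to linearize the logarithm by $\ln x\le\lambda x-\ln\lambda-1$ (any $\lambda>0$) with the tuned choice $\lambda=C/\log M$: this turns $\hXY\log\E\{T'\}$ into $\frac{\hXY C}{\log M\,\ln 2}\E\{T'\}+\hXY\bigl(\log\log M-\log C-\frac{1}{\ln 2}\bigr)$, and after collecting the $\E\{T'\}$ terms and substituting $a=\log M-\log\epsilon$ one gets
\[
    \E\{T\}\le\E\{T'\}\ <\ \frac{\log M+C+\beta-\log\epsilon+\hXY\bigl(\log\log M-\log C-\frac{1}{\ln 2}\bigr)}{C\bigl(1-\frac{\hXY}{\log M\,\ln 2}\bigr)} ,
\]
valid whenever $M$ is large enough that $\hXY<\log M\,\ln 2$. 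Dividing $\log M$ by this bound and recalling $\hXY=\XY/2$ reproduces exactly \eqref{eq:UnvEffectiveRate}. I expect the main obstacle to be precisely this chain: the universal log-metric is not a sum of i.i.d.\ terms, so Wald's identity cannot be applied to it directly; routing through Lemma~\ref{thm:UnvProbLemma} replaces it by the true-channel random walk against a $\log t$-inflated barrier, but this leaves behind an $\E\{\log T'\}$ term coupled back into $\E\{T'\}$, and extracting the closed form with the precise constants of \eqref{eq:UnvEffectiveRate} depends on the choice $\lambda=C/\log M$ in the linearization of $\ln$.
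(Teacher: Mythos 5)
Your proposal is correct and follows essentially the same route as the paper: the mixture-based mean-one martingale plus Markov's inequality with $A=M/\epsilon$ for the error probability, and Lemma~\ref{thm:UnvProbLemma} to reduce the universal stopping time to the true-channel random walk against an $a+\frac{\XY}{2}\log t+\beta$ barrier, followed by Wald's identity and the linearization of the logarithm with $v=\log M/C$. The only cosmetic differences (defining the martingale directly instead of the increments $\Phi_t$, and applying Jensen before the linearization rather than linearizing $\log T$ pointwise) lead to the same bound \eqref{eq:UnvExpStoppingTime} and hence to \eqref{eq:UnvEffectiveRate}.
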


\begin{proof}
The stopping time in the above-described scheme is
\begin{equation} \label{eq:StoppingTimeUnv}
    T = \min \left\{ t: \frac{p_U (x^t|y^t)}{\prod_{k=1}^t q(x_k)} > A \right\}
\end{equation}
since
\begin{equation}
    \log p_U (x^t|y^t) = \log p_{\tv} (x^t|y^t) - \log \frac{p_{\tv} (x^t|y^t)}{p_U (x^t|y^t)}
\end{equation}
we have
\begin{align}
    T &= \min \left\{ t: \frac{p_U (x^t|y^t)}{\prod_{k=1}^t q(x_k)} > A \right\} \\
      &= \min \left\{ t: \log \frac{ \prod_{k=1}^t p_{\tv} (x_k|y_k)}{\prod_{k=1}^t q(x_k)} > \log A + \log \frac{p_{\tv} (x^t|y^t)}{p_U (x^t|y^t)} \right\} \\
      &\leq \min \left\{ t: \log \frac{ \prod_{k=1}^t p_{\tv} (x_k|y_k)}{\prod_{k=1}^t q(x_k)} > \log A + \log \frac{p_{\htv} (x^t|y^t)}{p_U (x^t|y^t)} \right\} \label{eq:ThetaOptUse} \\
      &< \min \left\{ t: \sum_{k=1}^t \log \frac{p_{\tv}(x_k|y_k)}{q(x_k)} > \log A + \frac{\XY}{2} \log t + \beta \right\} \label{eq:UnvProbLemmaUse}
\end{align}
where \eqref{eq:ThetaOptUse} follows since $p_{\htv} (x_k|y_k) \geq p_{\tv} (x_k|y_k)$ by definition \eqref{eq:ThetaOpt} and \eqref{eq:UnvProbLemmaUse} follows from Lemma \ref{thm:UnvProbLemma}.

From the same considerations as in the proof of Theorem \ref{Theorem1}, at the stopping time $T$ we necessarily have
\begin{equation} \label{eq:UnvStoppingTimeCond}
    \sum_{k=1}^T \log \frac{p_{\tv}(X_k|Y_k)}{q(X_k)} \leq a + \frac{\XY}{2} \log T + \beta + \log \frac{p_{\tv}(X_T|Y_T)}{q(X_T)}
\end{equation}
where we define $a = \log A$. By \eqref{eq:UnvStoppingTimeCond} and Wald's Identity,
\begin{align}
    \E \{T\} &= \frac{\E \left\{\sum_{k=1}^T \log \frac{p_{\tv}(X_k|Y_k)}{q(X_k)}\right\}}{\E\left\{\log \frac{p_{\tv}(X|Y)}{q(X)}\right\}} \\
    &\leq \frac{a + \frac{\XY}{2} \cdot \E\{\log T\} + \beta + C}{C} \label{eq:BoundWithLog}
\end{align}
Since $\log_2 u \leq \frac{u}{v \ln 2} + \log_2 v - \frac{1}{\ln 2}$ for all $u,v>0$, \eqref{eq:BoundWithLog} implies
\begin{equation}
    \E \{T\} \leq \frac{a + \frac{\XY}{2} \left( \log v - \frac{1}{\ln 2} \right) + \beta + C}{C \left( 1 - \frac{\hXY}{C \cdot v\ln 2} \right)}
\end{equation}

For $v = \frac{\log M}{C}$  we obtain
\begin{equation} \label{eq:UnvExpStoppingTime}
    \E \{T\} \leq \frac{a + \frac{\XY}{2} \left( \log \log M - \log C - \frac{1}{\ln 2} \right) + \beta + C}{C \left( 1 - \frac{\hXY}{\log M \ln 2} \right)}
\end{equation}
which corresponds to the following effective rate:
\begin{equation}\label{eq:UnvEffectiveRateParam}
    R = \frac{C \log M \left( 1 - \frac{\hXY}{\log M \ln 2} \right)}{a + \frac{\XY}{2} \left( \log \log M - \log C - \frac{1}{\ln 2} \right) + \beta + C}
\end{equation}

Similarly to the derivation in Chapter \ref{ch:KnownChannel}, we bound the error probability by
\begin{equation*} \label{eq:ErrorProbUnv}
    \Pr \{E_w\} \leq (M-1) \Pr \left\{ \bigcup_{t=1}^{\infty}
    \left\{ \frac{p_U(X^t|Y^t)}{q(X^t)} > A \right\} \right\}
\end{equation*}
where $\{X_k\}_{k=1}^{\infty}$ and $\{Y_k\}_{k=1}^{\infty}$ are independent sequences. Define
\begin{equation}\label{eq:uidef}
    \Phi_t = \begin{cases}
                \frac{p_U(X^t|Y^t)}{p_U(X^{t-1}|Y^{t-1}) \cdot q(X_t)}, \ \prod_{k=1}^{t-1} \Phi_k \leq A \\
                1, \ \text{otherwise}
              \end{cases}
\end{equation}
We can see that
\begin{equation}
    \bigcup_{t=1}^{\infty}
    \left\{ \frac{p_U(X^t|Y^t)}{q(X^t)} > A \right\}
    \Leftrightarrow \prod_{t=1}^{\infty} \Phi_t > A
\end{equation}
Furthermore, we can see that $\E\{\Phi_t\}=1$ for all $t$ since
\begin{equation*}
    \E \left\{ \Phi_t | \prod_{k=1}^{t-1} \Phi_k > A \right\} = 1
\end{equation*}
and
\begin{align}
    \E \left\{ \Phi_t | \prod_{k=1}^{t-1} \Phi_k \leq A \right\}
    &= \E \left\{ \frac{p_U(X^t|Y^t)}{p_U(X^{t-1}|Y^{t-1}) \cdot q(X_t)} \right\} \\
    &= \E \left\{ \E \left\{ \frac{p_U(X^t|Y^t)}{p_U(X^{t-1}|Y^{t-1}) \cdot q(X_t)} | X^{t-1}, Y^t \right\} \right\} \\
    &= \E \left\{ \frac{ \E \left\{ \int_{\Lambda} w(\tv') \frac{p_{\tv'} (x^t|y^t)}{q(x_t)} d\tv' | X^{t-1}, Y^t \right\}}{\int_{\Lambda} w(\tv') p_{\tv'} (x^{t-1}|y^{t-1}) d\tv'} \right\} \\
    &= \E \left\{ \frac{ \sum_{x_t \in \mathcal{X}} q(x_t) \int_{\Lambda} w(\tv') \frac{p_{\tv'} (x^t|y^t)}{q(x_t)} d\tv'}{\int_{\Lambda} w(\tv') p_{\tv'} (x^{t-1}|y^{t-1}) d\tv'} \right\} \\
    &= \E \left\{ \frac{ \int_{\Lambda} w(\tv') \sum_{x_t \in \mathcal{X}} p_{\tv'} (x^t|y^t) d\tv' }{\int_{\Lambda} w(\tv') p_{\tv'} (x^{t-1}|y^{t-1}) d\tv'} \right\} \\
    &= \E \left\{ \frac{ \int_{\Lambda} w(\tv') p_{\tv'} (x^{t-1}|y^{t-1}) d\tv' }{\int_{\Lambda} w(\tv') p_{\tv'} (x^{t-1}|y^{t-1}) d\tv'} \right\} = 1
\end{align}
For an arbitrary $N$ we have
\begin{align}\label{eq:ProdPhi}
    \E \left\{\prod_{t=1}^N \Phi_t \right\} &= \E \left\{ \E \left\{\prod_{t=1}^N \Phi_t | \prod_{t=1}^{N-1} \Phi_t \right\} \right\} \\
    &= \E \{ \Phi_N \} \cdot \E \left\{\prod_{t=1}^{N-1} \Phi_t \right\} \\
    &= \E \left\{\prod_{t=1}^{N-1} \Phi_t \right\} = \ldots = 1
\end{align}
Since the above holds for all $N$, we also have
\begin{equation}\label{eq:InfProdPhi}
    \E \left\{\prod_{t=1}^\infty \Phi_t \right\} = 1
\end{equation}
Thus, similarly to the case of known channel, the error probability can be bounded by
\begin{align}
   \Pr \{E_w\} &\leq (M-1) \Pr \left\{ \bigcup_{t=1}^{\infty}
    \left\{ \frac{p_U(X^t|Y^t)}{q(X^t)} > A \right\} \right\} \label{eq:BoundErrorProbUnknownCh}\\
    &= (M-1) \Pr \left\{ \prod_{t=1}^{\infty} \Phi_t > A \right\} \leq \frac{M-1}{A}
\end{align}
Here again, we choose $A = M/\epsilon$ to obtain $P_e < \epsilon$. Substituting $a = \log A = \log M - \log \epsilon$ into \eqref{eq:UnvEffectiveRateParam} we finally get \eqref{eq:UnvEffectiveRate}.
\end{proof}

\begin{remark}
Interestingly, the upper bound on the error probability in \eqref{eq:BoundErrorProbKnownCh}, obtained when the decoder uses the known channel law $p(x|y)$, applies for an arbitrary probability assignment $p_U(x|y)$, where the only required constraint is that the latter integrates to unity.
\end{remark}

\begin{remark}
As in the case of known channel, we can use randomized decoder here to obtain the following rate:
\begin{equation}\label{eq:UnvEffectiveRateRand}
    R = \frac{C \left( 1 - \frac{\hXY}{\log M \ln 2} \right)}{1 + \frac{C + \beta - \log \delta + \frac{\XY}{2} \left( \log \log M - \log C - \frac{1}{\ln 2} \right)}{\log M}} \cdot \frac{1-\delta}{1-\epsilon}
\end{equation}
for all $0 < \delta < \epsilon$. As we mentioned in Section \ref{sec:CodingThmKnownChannel}, if the required error probability is small, randomization should not be applied. However, if the error probability constraint is loose enough, a better rate may be obtained by optimizing delta in \eqref{eq:UnvEffectiveRateRand}.
\end{remark}

 \section{Discussion}
 \subsection{Comparison to the Known Channel Case}
 Having obtained achievable rates for the cases of both known and unknown channels, it is interesting to compare these results and evaluate the rate degradation due to the unknown channel. For the case of a known channel, the effective rate at \eqref{eq:AchievableRateChannelDec} can be approximated by
 \begin{equation}
    R \thickapprox C \left(1 - \frac{C- \log \epsilon}{\log M} \right)
 \end{equation}

 For the case of unknown channel, we can approximate \eqref{eq:UnvEffectiveRate} by
 \begin{align}
    R_U &\thickapprox C \left(1 - \frac{C- \log \epsilon}{\log M} \right) \nonumber \\
      &- C \left( \frac{\XY}{2} \frac{\log \log M}{\log M} + \frac{(\hXY-1)/\ln 2 + \beta + \log C}{\log M} \right) + O \left( \frac{1}{\log^2 M}\right)
 \end{align}

 Hence, the penalty for lack of channel knowledge amounts to
 \begin{align} \label{eq:RateDegradation}
    R-R_U &= C \left( \frac{\XY}{2} \frac{\log \log M}{\log M} + \frac{(\hXY-1)/\ln 2 + \beta + \log C}{\log M} \right) \\
          &+ O \left( \frac{1}{\log^2 M}\right) \nonumber
 \end{align}

 The leading term in the latter expression behaves as $O(\log \log M / \log M) = O(\log K/K)$, factorized by the product of the cardinalities of input and output of the channel. It is interesting to compare this result with known results from universal source coding, where the \emph{redundancy}\footnote{The excess of the average codeword length above the entropy of the source.} is dominated by the cardinality of the alphabet of the source \cite{UnvPrediction}, and a term that behaves as $O(\log n/n)$, where $n$ is the source length.

 \subsection{Induced Error Exponent}
 Let us now examine Theorem \ref{thm:UnvRate} in light of the previous results. Equation \eqref{eq:UnvEffectiveRate} implies the following error exponent:
 \begin{equation}\label{eq:UnvErrorExp}
    - \frac{\log \epsilon}{\E\{T\}} = C - R - \frac{\XY}{2} \cdot \frac{\log \log M}{\log M} + O\left(\frac{1}{\log M}\right)
 \end{equation}
 As in the case of a known channel, we see that the error exponent is a linear function of the rate, but an additional term of order $O(\log \log M / \log M)$ is added. Here again, we interpret this term as a penalty for the lack of channel knowledge at the receiver. Furthermore, by taking $M \to \infty$, we can also see that \eqref{eq:UnvErrorExp} coincides with \cite[Proposition 1]{Telatar}.

 \subsection{Training and Channel Estimation}
 In many practical applications, communication over an unknown channel is done by means of channel estimation. In this approach, the transmission includes predefined \emph{training} signals, which are known to the receiver and are used to estimate the channel parameters. As an alternative to the universal communication scheme introduced in this chapter, we can use the following method. Prior to any message transmission, the transmitter sends a training sequence, which the receiver uses to estimate the channel. After the training phase, the transmitter sends the message. The receiver uses the \emph{estimated} channel parameters to decode the message, using, for instance, the decoding rule at \eqref{eq:ChannelDecoderLog}. A drawback from this approach is that even after the channel estimation phase, the residual error in the estimated channel parameters will degrade the performance of the decoder. Furthermore, enhancement of the channel estimation accuracy requires long training sequences, which will introduce non-negligible overhead to the transmission time. Clearly, using training will not lead to the convergence rate of \eqref{eq:RateDegradation}.

\chapter{Extensions}
\label{ch:Extensions}
\section{Joint Source-Channel Coding} \label{sec:JointSC}
In the previous chapters we assumed that the messages conveyed over the channel were equiprobable, which is the case if, for instance, the source of information has been compressed and the message $W$ is the output of the source encoder. Assume now, that the messages have arbitrary probabilities ${\pi(1),\ldots,\pi(M)}$. Each message now contains a different amount of information, which would translate into different codeword length at the output of the source encoder. However, in rateless codes the codeword assigned to each message is always infinite, and the actual codeword length is determined by the decoder. (The effective length of the message depends on the decoder's stopping time.) It is therefore tempting to use rateless codes for an uncompressed source and try to achieve good compression rate and reliable communication simultaneously. To simplify matters, we begin by tackling the case of known channel and postpone the analysis for unknown channel to Section \ref{sec:CompleteUnv}. We use the following generalized version of the encoder \eqref{eq:ChannelDecoderLog}.
\begin{equation} \label{eq:JointScDec}
    g_n(y^n)= \begin{cases}
                w, \ z_{w,1}+\ldots+z_{w,n} \geq a_w \\
                0, \ \text{if no such $w$ exists}
              \end{cases}
\end{equation}
where $a_w$ is a threshold that depends on the message $w$, and we define $a_w= \log A_w$. Repeating the derivation for the error probability done in the previous section, we get by Markov's inequality
\begin{equation}
    \Pr\{E_w\} \leq \sum_{w' \neq w} \frac{1}{A_{w'}}
\end{equation}
By choosing
\begin{equation}
    A_w  = \frac{1}{\epsilon \cdot \pi(w)} \qquad \forall w \in \mathcal{W}
\end{equation}
we get a uniform bound on the error probability
\begin{equation}
    \Pr\{E_w\} \leq \epsilon \cdot \sum_{w' \neq w} \pi(w') \leq \epsilon
\end{equation}
which also implies
\begin{equation}
    P_e \leq \epsilon
\end{equation}

Thus, for an appropriate choice of message-dependent threshold values, the average probability of error for the entire message set is bounded by $\epsilon$. Recall, however, that the effective rate depends on the threshold value and therefore needs to be reexamined here. When different thresholds are used for different messages, the stopping time depends on which message crosses the threshold. We can therefore use Wald's equation \eqref{eq:WaldC} conditioned on the true message:
\begin{equation}
    \E\{T|W=w\} \leq \frac{a_w + C}{C}
\end{equation}
where
\begin{equation}
    a_w = \log A_w = - \log \pi(w) - \log \epsilon
\end{equation}
Averaging on the entire message set, we have
\begin{align}
    \E\{T\} &= \E\{\E\{T|W\}\} \leq \frac{\E\{a_{_W}\} + C}{C} \nonumber \\
            &= \frac{\E\{- \log \pi(W)\} - \log \epsilon + C}{C} \nonumber \\
            &= \frac{H(W) - \log \epsilon + C}{C} \label{eq:JointScET}
\end{align}
where $H(W)$ is the entropy rate of the source in bits per symbol. Let us now examine \eqref{eq:JointScET} in a practical setting. Suppose the we wish to convey blocks of $K$ source bits with fixed probability of error $\epsilon > 0$. Since every source symbol contains $\log M$ bits, $K / \log M$ source symbols will be needed. Thus, the rate at which source bits can be conveyed over the channel will be
\begin{align}
    R &= \frac{K}{\E\{T\}} \geq \frac{K \cdot C}{\frac{K \cdot H(W)}{\log M} - \log \epsilon + C} \\
      &= \frac{C}{\mathscr{H}(W) + \frac{C - \log \epsilon}{K}} \\
      &= \frac{C}{\mathscr{H}(W)} \cdot \frac{1}{1 + \frac{C - \log \epsilon}{\mathscr{H}(W) \cdot K}} \\
      &= \frac{C}{\mathscr{H}(W)} \cdot \left(1 - O \left( \frac{1}{K} \right) \right) \label{eq:JointScRateAsym}
\end{align}
where we define $\mathscr{H}(W)=H(W)/ \log M$ as the per-bit entropy of the source.

Note that the encoder used here, as well as the codebook, are the same ones defined in Chapter \ref{ch:DefinitionsAndNotation} and the only change is in the definition of the decoder. The encoder is uninformed on the statistics of the source or the capacity of the channel, yet the rate approaches the optimum rate achievable by an informed encoder. We note the practical implication of such scheme: the compression algorithms can be implemented and maintained at the decoder, while the encoder remains simple and source-independent.

\section{Source Coding with Side Information} \label{sec:SI}
Suppose now, that the source of information emits independent pairs of messages $(W_1,W_2) \in \mathcal{W}_1 \times \mathcal{W}_2$ according to a probability distribution $\pi_{_{W_1,W_2}}(w_1,w_2)$, which are encoded separately and pass through a noiseless channel. Suppose that $R_1$ and $R_2$ are the coding rates of $W_1$ and $W_2$, respectively. By Slepian-Wolf theorem, if $W_1$ is encoded with rate $R_1 \geq H(W_1)$, then $W_2$ can be encoded independently with $R_2 = H(W_2|W_1)$. (This rate pair is a corner point in the achievable rate region.) We will now show that using rateless codes, we can approach this rate with some redundancy due to the usage of finite message set. The encoder of $W_1$ assigns to each message in $\mathcal{W}_1$ an infinite codeword $\mathbf{c}_{w_1} \in \{0,1\}^\infty$, $w_1 = 1,\ldots,|\mathcal{W}_1|$, and transmits it over the channel. The encoder of $W_2$ operates similarly to that of $W_1$ and independently of it, with codewords $\mathbf{d}_{w_2} \in \{0,1\}^\infty$, $w_2 = 1,\ldots,|\mathcal{W}_2|$. The codewords are assumed to be i.i.d.\ Bernoulli$(1/2)$ sequences. To reconstruct $W_1$, the decoder can use the decision rule \eqref{eq:JointScDec}, to to obtain an error probability of
\begin{equation} \label{eq:BoundErrorW1}
    \Pr\{\hat{W_1} \neq W_1\} \leq \frac{\epsilon}{2}
\end{equation}
Since binary code is used and the channel is noiseless, we have $C=1$, so \eqref{eq:JointScET} implies that the expected transmission time for $W_1$ satisfies
\begin{equation} \label{eq:SlepianWolfR1}
    R_1 = \E\{T_1\} \leq \ H(W) - \log \frac{\epsilon}{2} + 1
\end{equation}
Note that the coding rate is defined here as the average codeword length for the message set. Therefore, the effective rate equals the expected transmission time, rather than its reciprocal as in channel coding.

Having decoded message $W_1$, the decoder uses the following decision rule to reconstruct $W_2$:
\begin{equation}
    g^{(2)}_n(y^n,w_1)= \begin{cases}
                w_2, \ z_{w_2,1}+\ldots+z_{w_2,n} \geq a(w_1,w_2) \\
                0,   \ \text{if no such $w_2$ exists}
              \end{cases}
\end{equation}
where
\begin{equation}
    z_{w_2,k} = \log \frac{p(y_k|d_{w_2,k})}{p(y_k)}, \qquad k=1,\ldots,n
\end{equation}
Similar derivation for the error probability as in Section \ref{sec:JointSC} yields
\begin{equation}
    \Pr\{\hat{W_2} \neq w_2 \ | \ W_1 = w_1, W_2 = w_2 \} \leq \sum_{w_2' \neq w_2} \frac{1}{A(w_1,w_2)}
\end{equation}
We choose
\begin{equation}
    A(w_1,w_2) = \frac{1}{\epsilon/2 \cdot \pi_{_{W_2|W_1}}(w_2|w_1)}
\end{equation}
so that
\begin{equation}
    \Pr\{\hat{W_2} \neq w_2 \ | \ W_1 = w_1, W_2 = w_2 \} \leq \epsilon/2 \cdot \sum_{w_2' \neq w_2} \pi_{_{W_2|W_1}}(w_2|w_1) \leq \frac{\epsilon}{2}
\end{equation}
Therefore,
\begin{equation} \label{eq:BoundErrorW2}
    \Pr\{\hat{W_2} \neq W_2\} \leq \frac{\epsilon}{2}
\end{equation}
Using \eqref{eq:BoundErrorW1}, \eqref{eq:BoundErrorW2} and the union bound, we have
\begin{equation}
    \Pr\{\hat{W_1} \neq W_1 \bigcup \hat{W_2} \neq W_2\} \leq \epsilon
\end{equation}

Since $a(w_1,w_2) = - \log \epsilon/2 - \log \pi_{_{W_2|W_1}}(w_2|w_1)$, we can use Wald's equation for the stopping time of decoding $W_2$ to obtain
\begin{align}
    R_2 &= \E\{T_2\} = \E\{\E\{T_2|W_1,W_2\}\} \nonumber \\
        &\leq \E\{a(W_1,W_2)\} + 1 \nonumber \\
        &= \E\{- \log \pi_{_{W_2|W_1}}(W_2|W_1)\} - \log \frac{\epsilon}{2} + 1 \nonumber \\
        &= H(W_2|W_1) - \log \frac{\epsilon}{2} + 1 \label{eq:SlepianWolfR2}
\end{align}

Combining \eqref{eq:SlepianWolfR1} and \eqref{eq:SlepianWolfR2}, we get
\begin{equation} \label{eq:SlepianWolfSumRate}
    R_1 + R_2 = H(W_1,W_2) - 2 \log \frac{\epsilon}{2} + 2
\end{equation}

Similarly to Section \ref{sec:JointSC}, if we take blocks of $K$ source bits and a fixed error probability $\epsilon > 0$, we obtain
\begin{equation}
    R_1 + R_2 = H(W_1,W_2) \cdot \left(1 + O \left( \frac{1}{K} \right) \right)
\end{equation}

\section{Complete Universality} \label{sec:CompleteUnv}

We now consider the case of joint source-channel coding of an unknown source over an unknown channel, with an unknown amount of side-information at the receiver. Initially, we bring together the results of the previous sections to obtain a communication scheme for a source with unknown statistics over an unknown channel. As a straightforward generalization of the universal source coding scheme in Section \ref{sec:RateUnknownChannel}, we use a fusion of the decoders \eqref{eq:ChannelDecoderUnv} and \eqref{eq:JointScDec}, i.e.
\begin{equation} \label{eq:JointScUnvDec}
g_n(y^n)= \begin{cases}
            w, \ p_U(\mathbf{c}_w|y^n) \geq A_w \cdot q(\mathbf{c}_w) \\
            0, \ \text{if no such $w$ exists}
          \end{cases}
\end{equation}
where
\begin{equation}\label{eq:Aw}
    A_w = \frac{1}{\epsilon \cdot \pi(w)}
\end{equation}
Similar derivation to those done at Sections \ref{sec:RateUnknownChannel} and \ref{sec:JointSC} yields the following rate for an uncompressed source $W \in \{1,\ldots,M\}$ over an unknown channel with capacity $C$:
\begin{equation}\label{eq:JointScUnvRate}
    R = \frac{C \left( 1 - \frac{\hXY}{\log M \ln 2} \right)}{\mathscr{H}(W) + \frac{C + \beta - \log \epsilon + \frac{\XY}{2} \left( \log \log M - \log C - \frac{1}{\ln 2} \right)}{\log M}}
\end{equation}
where $\mathscr{H}(W)$ is defined in Section \ref{sec:JointSC}. We note that while the encoder can be ignorant of the source statistic, the decoder needs to know $\pi(w), \ w \in \mathcal{W}$.

We now go one step further and assume that the decoder has no knowledge of the statistics of the source or the channel. Suppose that the source $S$ generates sequences of $L$ symbols from an alphabet $\mathcal{S}$, drawn i.i.d. according to set of $|\mathcal{S}|$ unknown probabilities $\gv$. Each sequence is encoded as one message, hence $M = |\mathcal{S}|^L$. Instead of using the set of thresholds \eqref{eq:Aw}, which depends on the unknown probabilities, we use a universal probability measure \cite{UnvPrediction}
\begin{equation}
    \ph(s^L) = \int u(\gv) \pi_{\gv}(s^L)
\end{equation}
so that
\begin{equation}
    a_w = \log A_w = -\log \epsilon -\log \ph (s^L)
\end{equation}
If the weight function $u(\cdot)$ is chosen to be Jeffreys prior, we get (see \cite[Eq.17]{UnvPrediction})
\begin{equation}
    \E\{a_w\} = -\log \epsilon + H(W) + \frac{|\mathcal{S}|-1}{2} \log \frac{L}{2 \pi e} + O(1)
\end{equation}
Hence, similarly to \eqref{eq:JointScUnvRate} we can achieve the following rate
\begin{equation}\label{eq:CompUnvRate}
    R = \frac{C \left( 1 - \frac{\hXY}{\log M \ln 2} \right)}{\hat{\mathscr{H}}(W) + \frac{C + \beta - \log \epsilon + \frac{\XY}{2} \left( \log \log M - \log C - \frac{1}{\ln 2} \right)}{\log M}}
\end{equation}
where
\begin{equation}
    \hat{\mathscr{H}}(W) = \mathscr{H}(W) + \frac{|\mathcal{S}|-1}{2} \frac{\log L}{\log M} + O\left(\frac{1}{\log M}\right)
\end{equation}
Recall that $L = \log_{|\mathcal{S}|} M$, so
\begin{equation}\label{eq:EmpEntropy}
    \hat{\mathscr{H}}(W) = \mathscr{H}(W) + \frac{|\mathcal{S}|-1}{2} \frac{\log \log M}{\log M} + O\left(\frac{1}{\log M}\right)
\end{equation}
By plugging \eqref{eq:EmpEntropy} into \eqref{eq:CompUnvRate} we get
\begin{equation}\label{eq:CompUnvRateAsym}
    R = \frac{C}{\mathscr{H}(W)} \cdot \left(1 - O \left( \frac{\log K}{K} \right) \right) + O \left( \frac{1}{K} \right)
\end{equation}
where $K = \log M$ is the number of encoded bits. Comparing \eqref{eq:CompUnvRateAsym} to \eqref{eq:JointScRateAsym}, we see that the leading term is unchanged and equals the optimal rate achievable by separated source-channel coding. However, the lack of information affects the rate of convergence, which is now dominated by a $O \left( \frac{\log K}{K} \right)$ term, as opposed to $O \left( \frac{1}{K} \right)$ for an informed decoder.

The implications of the latter result are far-reaching. We have shown that even if the statistics of both the channel and the source are unknown to the decoder, rateless coding not only achieves the best source-channel coding rate as $M \to \infty$, but it also has the same asymptotics of a rateless scheme with an informed decoder. This observation has been made in \cite[Ch.4]{Nadav} for infinitely large message sets. The results obtained here coincide with those of \cite{Nadav}, and also quantify the redundancy caused by the lack of information on the source and the channel, and by the use of finite blocks.

\subsection*{Unknown Side Information at the Decoder}
Similarly to Section \ref{sec:SI}, if the source contains side information $V$ that is known non-causally at the decoder, we can further improve the communication rate. Combining the technique from Section \ref{sec:SI} with the derivation above, we obtain the following rate for universal joint source-channel coding with side information at the decoder:
\begin{equation}\label{eq:CompUnvRateSI}
    R = \frac{C}{\mathscr{H}(W|V)} \cdot \left(1 - O \left( \frac{\log K}{K} \right) \right) + O \left( \frac{1}{K} \right)
\end{equation}
where $\mathscr{H}(W|V)$ is the conditional entropy of the source $W$ given the side information $V$, normalized by $\log M$. Since $\mathscr{H}(W|V) \leq \mathscr{H}(W)$, the side information improves the rate, even if the encoder is uninformed on the amount (or the existence) of the side information.

\chapter{Summary}
\label{ch:Summary}
In this study we developed and analyzed several communication schemes that are all based on the concept of \emph{rateless codes}. In rateless codes, each codeword has an infinite length and the decoding length is dynamically determined by the confidence level of the decoder. Throughout this study, we allowed the coding schemes to have a fixed error probability, while aiming to achieve shortest mean transmission time, or equivalently, the highest rate. This approach is different than the prevalent one, in which the communication rate is held fixed and the codebook is enlarged indefinitely so that the error probability vanishes. We demonstrated how rateless codes, combined with sequential decoding, can be used in basic communication scenarios such as communication over a DMC, but can also be used to solve more complex problems, such as communication over an unknown channel. The decoding methods introduced here enabled us to obtain results for finite message set, while previous studies were restricted to asymptotic results.

We began by describing rateless codes and surveyed some previous results related to such coding schemes. Then, we introduced the sequential decoder that uses a known channel law. Using Wald's theory and the notion of stopping time, we obtained an upper bound for the mean transmission time for a fixed error probability, and the resulting effective rate is shown to approach to the capacity of the channel as the size of the message set, $M$, grows. We also obtained an upper bound for the rate for a fixed error probability. The upper bound is not tight for small $M$, but it converges to the achievable rate as $M \to \infty$. We conjecture that a stronger converse can be found, which will be tighter also in the non-asymptotic realm. Although we developed the above-mentioned scheme for a DMC, we also demonstrated that it is applicable in a memoryless Gaussian channel.

For the case of an unknown channel we introduced a novel decoding metric. Unlike previous studies, the universal decoding metric in not based on empirical mutual information, but on a mixture probability assignment. For an appropriate choice of mixture, we were able to bound the difference between the universal metric and the one used by an informed decoder. Thus, we used the results obtained for an informed decoder to upper bound the mean transmission time in the universal case.

We then applied rateless coding to more advanced scenarios. We showed how with only a minor change in the sequential decoder, we can easily use rateless codes as a joint source-channel coding scheme. We also used rateless coding for source coding with side information, obtaining the optimum Slepian-Wolf rate for this setting. Finally, we combined the techniques for universal channel coding, joint source-channel coding and source coding with side information and demonstrated that even without any information on the source, the channel or the amount (or even the existence) of side information---reliable communication is feasible, and the rate can be analyzed even for a finite message set.

\bibliographystyle{IEEETran}
\bibliography{Bibliography}
\nocite{*}

\end{document}